\documentclass[12pt, a4paper]{article}
\usepackage{arxiv}
\usepackage[utf8]{inputenc} 
\usepackage[T1]{fontenc}    

\usepackage{amsmath}
\usepackage{amsthm}
\usepackage{amsfonts}
\usepackage[dvipsnames]{xcolor}
\usepackage{graphicx}
\usepackage{hyperref}
\usepackage[capitalise,nameinlink]{cleveref}

\hypersetup{
	colorlinks = true,
	linkcolor = OliveGreen,
	anchorcolor = black,
	citecolor = RedViolet,
	filecolor = cyan,
	menucolor = red,
	runcolor = cyan,
	urlcolor = MidnightBlue,
}

\usepackage{mathtools} 
\usepackage{arydshln}  
\usepackage{enumitem}
\usepackage{soul}
\setlist[itemize]{leftmargin=1cm}

\newtheorem{theorem}{Theorem}
\newtheorem{lemma}{Lemma}
\newtheorem{definition}{Definition}
\newtheorem{remark}{Remark}

\newcommand{\dd}{\mathrm{d}}

\title{Multilevel Delayed Acceptance MCMC}

\author{%
	Mikkel B. Lykkegaard\thanks{Corresponding author.} \\
	Centre for Water Systems and \\
	Institute for Data Science and AI\\
	University of Exeter \\
	EX4 4QF, United Kingdom \\
	\texttt{m.lykkegaard@exeter.ac.uk} \\
	\And
	Tim J. Dodwell \\
	The Alan Turing Institute and \\
	Institute for Data Science and AI \\
	University of Exeter \\
	EX4 4QF, United Kingdom \\
	\texttt{t.dodwell@exeter.ac.uk} \\
	\And
	Colin Fox \\
	Department of Physics \\
	University of Otago \\
	Dunedin 9016, New Zealand \\
	\texttt{colin.fox@otago.ac.nz} \\
	\And
	Grigorios Mingas \\
	The Alan Turing Institute  \\
	British Library, 96 Euston Road \\
	NW1 2DB, United Kingdom\\
	\texttt{gmingas@turing.ac.uk} \\
	\And
	Robert Scheichl \\
	Institute for Applied Mathematics and \\
	Interdisciplinary Center for Scientific Computing \\ 
	Heidelberg University \\
	69120 Heidelberg, Germany \\
	\texttt{r.scheichl@uni-heidelberg.de} \\
}


\begin{document}

\maketitle

\begin{abstract}
We develop a novel Markov chain Monte Carlo (MCMC) method that exploits a hierarchy of models of increasing complexity to efficiently generate samples from an unnormalized target distribution. Broadly, the method rewrites the Multilevel MCMC approach of Dodwell \textit{et al.} (2015) in terms of the Delayed Acceptance (DA) MCMC of Christen \& Fox (2005). In particular, DA is extended to use a hierarchy of models of arbitrary depth, and allow subchains of arbitrary length. We show that the algorithm satisfies detailed balance, hence is ergodic for the target distribution. Furthermore, multilevel variance reduction is derived that exploits the multiple levels and subchains,  and an adaptive multilevel correction to coarse-level biases is developed. Three numerical examples of Bayesian inverse problems are presented that demonstrate the advantages of these novel methods. The software and examples are available in \texttt{PyMC3}.
\end{abstract}

\keywords{Markov chain Monte Carlo\and Bayesian Inverse Problems \and Multilevel Methods \and Model Hierarchies \and Detailed Balance \and Variance Reduction \and Adaptive error model}

\section{Introduction}

Sampling from an unnormalised posterior distribution $\pi(\cdot)$ using Markov Chain Monte Carlo (MCMC) methods is a central task in computational statistics. This can be a particularly challenging problem when the evaluation of $\pi(\cdot)$ is computationally expensive and the parameters $\theta$ and/or data ${\bf d}$ defining $\pi(\cdot)$ are high-dimensional. The sequential (highly) correlated nature of a Markov chain and the slow converge rates of MCMC sampling, means that often many MCMC samples are required to obtain a sufficient representation of a posterior distribution $\pi(\cdot)$. Examples of such challenging problems frequently occur in Bayesian inverse problems, image reconstruction and probabilistic machine learning, where simulations of the measurements (required to calculate a likelihood function) depend on the evaluation of complex mathematical models (e.g. a system of partial differential equations) or the evaluation of prohibitively large data sets.

The topic of MCMC methods is a rich and active field of research. While the basic idea of the original Metropolis--Hastings algorithm \cite{Met53, Has70} is almost embarrassingly simple, it has given rise to a wide variety of algorithms tailored to different applications. Most notably, the Gibbs sampler \cite{geman_stochastic_1984}, which samples each variable conditional on the other variables, the Metropolis Adjusted Langevin Algorithm (MALA, \cite{roberts_exponential_1996, roberts_optimal_1998}), Hamiltonian Monte Carlo (HMC, \cite{duane_hybrid_1987}) and the No-U-Turn Sampler (NUTS, \cite{nuts}), which all exploit gradient information to improve the MCMC proposals. We would also like to highlight the seminal work of Haario et al. \cite{haario_adaptive_2001} on the Adaptive Metropolis sampler that launched a new paradigm of adaptive MCMC algorithms (see e.g. \cite{atchade_adaptive_2006, andrieu_tutorial_2008, roberts_examples_2009, vrugt_accelerating_2009, zhou_hybrid_2017, cui_posteriori_2019}).

The most efficient MCMC methods cheaply generate candidate proposals, which have a high probability of being accepted, whilst being almost independent from the previous sample. In this paper, we define a MCMC approach capable of accelerating existing sampling methods, where a hierarchy (or sequence) $\pi_0(\cdot), \ldots, \pi_{L-1}(\cdot)$ of computationally cheaper approximations to the exact posterior density $\pi(\cdot) \equiv \pi_L(\cdot)$ are available. As with the original Delayed Acceptance (DA) algorithm, proposed by Christen and Fox \cite{Chr05}, short runs of MCMC subchains, generated using a computationally cheaper, approximate density $\pi_{\ell-1}(\cdot)$, are used to generate proposals for the Markov chain targeting $\pi_\ell(\cdot)$. The original DA method formulated the approach for just two levels and a single step on the coarse level. In this paper we extend the method by recursively applying DA across a hierarchy of model approximations for an arbitrary number of steps on the coarse levels -- a method we term {\em Multilevel Delayed Acceptance} (MLDA). There are clear similarities with Multilevel Monte Carlo sampling methods, first proposed by Heinrich \cite{heinrich_multilevel_2001} and later by Giles \cite{giles_multilevel_2008}, which have been widely studied for forward uncertainty propagation problems (see e.g. \cite{cliffe_multilevel_2011, barth_multilevel_2011, charrier_finite_2013, teckentrup_further_2012}) and importantly have been extended to Bayesian inverse problems in the Multilevel Markov Chain Monte Carlo (MLMCMC) approach by Dodwell {\em et al.}~\cite{dodwell_hierarchical_2015} as well as to the Multi-Index setting \cite{haji-ali_multi-index_2016,jasra_multi-index_2017}. 

The fundamental idea of multilevel methods is simple: We let the cheaper (or \textit{coarse}) model(s) do most of the work. In the context of sampling, be it Monte Carlo or MCMC, this entails drawing more samples on the coarser levels than on the finer, and use the entirety of samples across all model levels to improve our Monte Carlo estimates. Additionally, in the context of MCMC, the samplers on the coarser levels inform the samplers on the finer levels by filtering out poor MCMC proposals, effectively boosting the acceptance rate and hence computational efficiency on the finer levels.

The multilevel MCMC algorithm of Dodwell {\em et al.}~\cite{dodwell_hierarchical_2015} achieves these goals, and, importantly, provides a multi-level estimator for quantities of interest, utilising the output of all chains, to allow tuning of work at each level to maximise variance reduction per compute effort. MLMCMC also allows parallelisation across levels by running chains at the coarser levels independently of the finer. However, a by-product of the latter property is that MLMCMC only produces provably unbiased estimates in the limit of infinitely long coarse chains; see \cref{sec:mlmcmc_comparison}. This is a potential problem as computational efficiency may require quite short coarse chains.

One of our main motivations for reworking MLMCMC was to develop a MCMC that could operate with multiple levels of approximation for which we can write a multi-level estimator, 
and that is provably unbiased for finite-length coarse chains. This paper reports the resulting algorithm, that extends the delayed-acceptance MCMC to a multi-level setting with finite-length coarse chains. Those extensions pose several challenges requiring novel solutions;  (1)  As mentioned above, DA evaluates proposals using a single step on the coarse level; the extension to finite-length subchains is presented in \cref{sec:rst}. (2) A less-obvious challenge is that MLMCMC operates with a different state variable at each level, with fewer components in the state at coarser levels, whereas DA uses the same state at both levels; Extension of DA to using embedded state spaces is presented in \cref{sec:embedding}, where the extra `modes' at the fine level are proposed using an additional kernel. The extension to a multi-level DA is then straightforward by recursion on levels, as presented in \cref{sec:MLDA}. 
(3) A further challenge is deriving a multi-level estimator for MLDA since the coarse chains in MLDA do not converge to known approximate posterior distributions, unlike MLMCMC where the independence of chains means that, after burn-in, each chain samples from a known approximate distribution. In contrast, the short coarse chains in MLDA are, in a sense, always in burn-in. We overcome this difficulty by randomising subchain length for proposals, as shown in \cref{sec:rst}, and using a fixed subchain length for fine-level estimates to ensure that estimates of equivalent terms in the telescoping sums converge to the same value. That multi-level estimator is presented in \cref{sec:VarianceReduction}. The \emph{adaptive} DA algorithm introduced in~\cite{cui_bayesian_2011} increases significantly the statistical efficiency by constructing 
\emph{a posteriori} error models that improve the approximate posterior distributions at coarse levels; see~\cite{cui_posteriori_2019, fox2020randomized}.  Adaptive error models for MLDA are presented in \cref{sec:adapt}.

Finally, a further challenge is that DA MCMC is inherently sequential and fine-level proposals must be evaluated on the coarse levels, which precludes parallelisation across levels. Whether MLDA can be effectively parallelised remains an outstanding question, that we discuss in \cref{sec:conclusions}.

The paper is structured as follows: In the following section we present the MLDA algorithm, proving detailed balance of each extension of DA. In this process, we develop two additional algorithms, namely \textit{Randomised-Length-Subchain Surrogate Transition} (RST) in \cref{sec:rst} and \textit{Two Level Delayed Acceptance} (TLDA)  in \cref{sec:embedding}, each of which are valid MCMC samplers in their own respect. Throughout these sections we develop algorithms for two levels only, denoted C for `coarse' (the approximate chain) and F for `fine' (the exact chain). In \cref{sec:embedding} we introduce different states at coarse and fine levels, also denoted (with a slight abuse of notation) by subscripts C and F, respectively. A recursive, multilevel DA algorithm is defined in \cref{sec:MLDA} with detailed balance following from previous sections. A comparison of MLDA and MLMCMC is presented in \cref{sec:mlmcmc_comparison} to provide some intuition on similarities and differences of the two algorithms.  MLDA then provides a provably convergent multi-level algorithm for which we develop a multi-level estimator in \cref{sec:VarianceReduction} that can be exploited for variance reduction. Adaptive error models are developed in \cref{sec:adapt}. In \cref{sec:examp}, we demonstrate the algorithm using three examples of Bayesian inverse problems. First, we show that extended subchains on the coarse level can significantly increase the effective sample size 
compared to an equivalent single-level sampler on the fine level, using an example from gravitational surveying. Second, we demonstrate multilevel variance reduction on a predator-prey model, where coarse models are constructed by restricting the length of the time window over which the differential equation model is fitted to data. Third, we demonstrate the multilevel error model in the context of a subsurface flow problem. We show that when we utilize the error model, we can achieve high effective sample sizes on the finest level, even when a very crude approximation is employed as the coarsest model. Conclusions and future work are discussed in \cref{sec:conclusions}.

\section{Multilevel Delayed Acceptance}

In this section we first outline the theoretical foundations of vanilla Metropolis--Hastings based MCMC \cite{Met53,Has70} and the Delayed Acceptance (DA) method proposed by Christen and Fox \cite{Chr05}. We extend DA in two ways: horizontally, by allowing the coarse sampler to construct subchains of multiple coarse samples before proposing a sample on the fine level; 
and vertically, by recursively using DA on an entire hierarchy of models with increasing resolution/accuracy. This constitutes the Multilevel Delayed Acceptance (MLDA) sampler. From this foundation we further develop a multilevel estimator to exploit variance reduction, and a multilevel adaptive error model which improves the statistical efficiency of the algorithm.

\subsection{Basic MCMC, Ergodic Theorems and Delayed Acceptance} \label{sec:basic_mcmc}

To show that MLDA correctly generates samples from the unnormalised
target density $\pi\left(\cdot\right)$ we will build on standard
ergodicity results for Markov chains (see \cite{roberts_general_2004} and references therein).
Each algorithm considered here defines a stochastic iteration on a well-defined state, so defines a Markov chain. Hence, we can apply classical ergodic theorems for Markov chains.

The ergodic theorems for Markov chains (see \cite{roberts_general_2004} and references therein)
state that the chain is $\pi$-ergodic if the chain is $\pi$-irreducible,
aperiodic, and reversible with respect to $\pi$. Essentially, irreducibility
and aperiodicity guarantee that the Markov chain has a unique equilibrium
distribution, while reversible with respect to $\pi$ ensures that $\pi$
is that unique distribution. The condition of $\pi$-irreducibility
is satisfied when the proposal distribution is chosen 
such that the standard Metropolis--Hasting algorithm is $\pi$-irreducible. For algorithms based on delayed acceptance, it is also necessary that the coarse-level approximation is chosen to maintain irreducibility; see \cite[Thm.~1]{Chr05} for precise
conditions on the approximation. Aperiodicity is a mild condition
that is satisfied by any Metropolis--Hastings algorithm with a non-zero
probability of rejection on any $\pi$-positive set; again see
\cite[Thm.~1]{Chr05}. We will assume that the proposal and approximations are chosen so that these conditions hold. Accordingly, we focus on establishing reversibility of algorithms, 
which is equivalent to the stochastic iteration being in detailed balance with the target density
$\pi$; see \cite{Liu08}.

\subsubsection{Metropolis--Hastings MCMC}
Consider first the plain vanilla Metropolis--Hastings algorithm for
sampling from target density $\pi_\text{t}$. Given an initial state
$\theta^{0}$ and a proposal distribution with density function $q\left(\cdot|\theta\right)$,
the Metropolis--Hastings algorithm for generating a chain of length $N$ is given in Alg.~1.

\begin{center}\vspace{0.5cm}
\fbox{%
  \parbox{0.975\textwidth}{\textbf{Algorithm 1. Metropolis--Hastings} (\textbf{MH})\\[1ex] 
\textbf{function}: $ 
                               \left[\theta^{1},\ldots, \theta^{N}\right] 
                        =\textbf{MH}\left(\pi_\text{t}(\cdot),q(\cdot|\cdot),\theta^{0},N \right)  $ \\[1ex]
\begin{tabular}{@{}l@{}p{0.88\textwidth}}
    \textbf{input: }  & density of target distribution $\pi_\text{t}(\cdot)$, density of proposal distribution $q(\cdot|\cdot)$, initial state $\theta^{0}$, number of steps $N$\\
    \textbf{output: } & ordered list of states $\left[\theta^{1},\ldots, \theta^{N}\right]$ \big(or just the final state $\theta^{N}$\big)
    \end{tabular} \\[1ex]
\textbf{for} $j = 0$ to $N-1$ : 
	\begin{itemize}
		\item Given $\theta^{j}$, generate a proposal $\psi$ distributed as $q(\psi|\theta^{j})$,
                
		\item Accept proposal $\psi$ as the next state, i.e. set $\theta^{j+1} = \psi$, with probability
		\begin{equation}
                 \alpha(\psi|\theta^{j}) = \min \left\{1, \frac{\pi_\text{t}(\psi)q(\theta^{j}|\psi)}{\pi_\text{t}(\theta^{j})q(\psi|\theta^{j})} \right\}
                 \label{eq:alphanorm}
         \end{equation}
         otherwise reject $\psi$ and set $\theta^{j+1} = \theta^{j}$. 
	\end{itemize}
  }
}\vspace{0.5cm}
\end{center}

For each $j$, Alg.~1 simulates a fixed stochastic iteration with
$\theta^{j+1}$ being conditionally dependent only on $\theta^{j}$, the state at step
$j$, which can be represented by a fixed (stationary) transition kernel $K\left(y|x\right)$ that 
generates a (homogeneous) Markov chain. For target
density $\pi_\text{t}$, detailed balance may be written
\[
\pi_\text{t}\left(x\right)K\left(y|x\right)=\pi_\text{t}\left(y\right)K\left(x|y\right),
\]
which, in general, is the property that $K$ is self-adjoint in the measure
$\pi_\text{t}$. See \cite[Sec. 5.3]{Liu08} for a nice method for showing that $K$
simulated by \textbf{MH} Alg. 1 is in detailed balance with $\pi_\text{t}$, and also for a more
general class of acceptance probabilities. 

Hence, under mild conditions on the proposal density $q$ and the initial state $\theta^{0}$, 
the ergodic theorem for Markov chains applies, which guarantees that the $j$-step
density converges to $\pi_\text{t}$, asymptotically as $j\to \infty$. Hence, the Markov chain is $\pi_\text{t}$-ergodic.

A common choice of proposal distributions for inverse problems in multiple dimensions are random-walk proposals, though these typically lead to adjacent states of the chain being highly correlated, resulting in high computational cost to estimate posterior expectations with a desired accuracy. In the following we do not discuss the choice of proposal $q$, though in some sense our primary concern is how to improve a proposal once chosen. We also do not discuss the choice of initial state.

The following lemma gives an alternative form of the acceptance probability in 
Eq.~\cref{eq:alphanorm} used later.
\begin{lemma} \label{lem:cf2}
 If the proposal transition kernel  $q(\cdot|\cdot)$ in Alg.~1 is in detailed balance with some distribution $\pi^*$, then the acceptance probability \cref{eq:alphanorm} may be written
  		\begin{equation}
                 \alpha(\psi|\theta^{j}) = \min \left\{1, \frac{\pi_\text{t}(\psi)\pi^*(\theta^{j}))}{\pi_\text{t}(\theta^{j})\pi^*(\psi)} \right\}
                 \label{eq:alphanormstar}
         \end{equation}
\end{lemma}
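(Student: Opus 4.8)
The plan is to substitute the detailed-balance identity for $q$ directly into \cref{eq:alphanorm}. By hypothesis the proposal kernel satisfies $\pi^*(x)\,q(y|x) = \pi^*(y)\,q(x|y)$ for (almost) all pairs $(x,y)$. Taking $x = \theta^{j}$ and $y = \psi$ and rearranging gives
\[
\frac{q(\theta^{j}|\psi)}{q(\psi|\theta^{j})} = \frac{\pi^*(\theta^{j})}{\pi^*(\psi)},
\]
wherever the denominators do not vanish. Inserting this into the ratio appearing in \cref{eq:alphanorm} immediately yields \cref{eq:alphanormstar}. That is essentially the whole content of the lemma: it is a one-line manipulation of the Metropolis--Hastings ratio.

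The only point requiring care is the treatment of states where one of the densities vanishes. First I would note that if $q(\psi|\theta^{j}) = 0$, the proposal $\psi$ is (almost surely) never generated when the chain is at $\theta^{j}$, so the value of $\alpha(\psi|\theta^{j})$ on that set is immaterial. Second, I would check that a transition into a $\pi^*$-null state cannot occur from a $\pi^*$-positive state: if $\pi^*(\theta^{j}) > 0$ and $\pi^*(\psi) = 0$, detailed balance of $q$ forces $\pi^*(\theta^{j})\,q(\psi|\theta^{j}) = \pi^*(\psi)\,q(\theta^{j}|\psi) = 0$, hence $q(\psi|\theta^{j}) = 0$, so again $\psi$ is not proposed. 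Thus on the set of pairs $(\theta^{j},\psi)$ that actually arises with positive probability the displayed identity holds and the rewriting is valid. I would also remark that $\pi^*$ need only be known up to a normalising constant, since it enters \cref{eq:alphanormstar} as a ratio, exactly as $\pi_\text{t}$ does.

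I do not anticipate a genuine obstacle here; the \emph{hard part}, such as it is, amounts only to being explicit about the null-set bookkeeping so that the identity is invoked solely where it is meaningful. It is worth emphasising, however, that this innocuous-looking reformulation is precisely what makes delayed acceptance work: when the fine-level proposal $\psi$ is produced by a subchain that is itself in detailed balance with the coarse target $\pi^*$, the acceptance probability \cref{eq:alphanormstar} no longer references the (intractable) subchain transition density and depends only on target densities, which is the form exploited throughout \cref{sec:rst} and \cref{sec:embedding}.
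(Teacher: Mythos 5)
Your proof is correct and follows essentially the same route as the paper's: substitute the detailed-balance identity $\pi^*(\psi)q(\theta^{j}|\psi) = \pi^*(\theta^{j})q(\psi|\theta^{j})$ into \cref{eq:alphanorm}, valid almost everywhere. Your additional null-set bookkeeping simply makes explicit what the paper's ``almost everywhere'' qualifier covers.
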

\begin{proof}
 Substitute the detailed balance statement $\pi^*(\psi)q(\theta^{j}|\psi) = \pi^*(\theta^{j}))q(\psi|\theta^{j})$ into \cref{eq:alphanorm} to get \cref{eq:alphanormstar}, almost everywhere.
\end{proof}

\subsubsection{MCMC for Hierarchical Bayesian Models}

A hierarchical Bayesian model of some problem, including inverse problems, leads to the posterior distribution for unknown parameters $\theta$ conditioned on measured data $\mathbf{d}$, given by Bayes' rule 
\begin{equation}
\pi(\theta|{\bf d}) = \frac{\pi({\bf d}|\theta)\pi_\mathrm{p}(\theta)}{\pi({\bf d})} 
.
\label{eq:post}
\end{equation}

In the language of Bayesian analysis, $\pi_\mathrm{p}(\theta)$ is the \textit{prior} distribution, $\pi({\bf d}|\theta)$ as a function of $\theta$ is the \textit{likelihood} function, and $\pi({\bf d})$ is a normalising constant commonly referred to as the \textit{evidence}. The likelihood function is induced by the data-generating model
\begin{equation}
    {\bf d} = \mathcal F(\theta) + \epsilon
\end{equation}
where $\mathcal F(\theta)$ is the forward model and $\epsilon$ is the measurement error. When the measurement error is Gaussian, i.e., $\epsilon \sim \mathcal N(0, \Sigma_\epsilon)$, the particular likelihood function is proportional to
\begin{equation} \label{eq:likelihood} 
\mathcal{L}({\mathbf d} | \theta) = \exp \left( -\frac{1}{2} (\mathcal{F}(\theta) - {\mathbf d})^T {\mathbf \Sigma}_\epsilon^{-1} (\mathcal{F}(\theta) - {\mathbf d}) \right).
\end{equation}

In the Bayesian framework, solving the inverse problem is performed by exploring the posterior distribution $\pi(\theta|{\bf d})$ defined by \cref{eq:post} and evaluating statistics with respect to that distribution. Sample-based inference does this by drawing samples from the posterior distribution to evaluate sample-based Monte Carlo estimates of expected values. 
The plain vanilla route to drawing samples from $\pi(\theta|{\bf d})$ is to invoke \textbf{MH} Alg.~1 with $\pi_\text{t}(\cdot)=\pi(\cdot|{\bf d})$  such that
\[ 
\big[\theta^{1},\ldots, \theta^{N}\big] =\mathbf{MH}\left(\pi(\theta|{\bf d}),q(\cdot|\cdot),\theta^{0},N \right) . 
\]
Asymptotically, the density of the $j$th state $\theta^{j}$ converges to the posterior density $\pi(\cdot|{\bf d})$ and averages over this chain converge to expectations with respect to $\pi(\cdot|{\bf d})$, asymptotically in $N$.

\begin{remark}
When $\pi({\bf d})$ in \cref{eq:post} is finite, the Metropolis ratio ${\pi_\text{t}(\psi)}/{\pi_\text{t}(\theta^{j})}$ in Alg.~1 Eq.~\cref{eq:alphanorm} may be evaluated as a ratio of unnormalized densities
\begin{equation}\label{eq:mhunnorm}
      \frac{\pi({\bf d}|\psi)\pi_p(\psi)}{\pi({\bf d}|\theta^{j})\pi_p(\theta^{j})}. 
\end{equation}
Substitute  $\pi_\text{t}(\cdot)=\pi(\cdot|{\bf d})$ from Eq.~\cref{eq:post} into the Metropolis ratio and note that the normalisation constants $1/\pi({\bf d})$ in the numerator and in the denominator cancel.
Hereafter, for brevity we typically write the acceptance probability using the ratio of normalized posterior densities, as in Eq.~\cref{eq:alphanorm}, but actually compute with unnormalized densities, as in Eq.~\cref{eq:mhunnorm}.
\end{remark}

\subsubsection{Delayed Acceptance MCMC}
The Delayed Acceptance (DA) algorithm was introduced by Christen and Fox in \cite{Chr05}, with the goal of reducing the computational cost per iteration by utilizing a computationally cheaper approximation of the forward map, and thus also of the posterior density, for evaluating the acceptance probability in Alg.~1. One may also view DA as a way to improve the proposal kernel $q$, since DA modifies the proposal kernel using a Metropolis--Hastings accept-reject step to give an effective  proposal that is in detailed balance with an (approximate) distribution that is hopefully closer to the target than is the equilibrium distribution of the original proposal kernel.

The delayed acceptance algorithm is given in Alg.~2, for target (fine) density $\pi_\mathrm{F}$ and approximate (coarse) density $\pi_\mathrm{C}$.
Delayed acceptance  first performs a standard Metropolis--Hastings accept/reject step (as given in Alg.~1) with the approximate/coarse density $\pi_\mathrm{C}$. If accepted, a second accept reject/step is used, with acceptance probability chosen such that the composite iteration satisfies detailed balance with respect to the desired target $\pi_\mathrm{F}$.

In Alg.~2 Eq.~\cref{eq:beta_da}, $q_\mathrm{C}(\cdot|\cdot)$ is the effective proposal density from the first Metropolis--Hastings step with coarse density $\pi_\mathrm{C}(\cdot)$ as target; see~\cite{Chr05} for details. The acceptance probability in Eq.~\cref{eq:beta_da} is the standard Metropolis--Hastings rule for proposal density $q_\mathrm{C}$, targeting $\pi_{\mathrm{F}}(\cdot)$, hence Alg.~2 simulates a kernel in detailed balance with $\pi_{\mathrm{F}}(\cdot)$ and produces a chain that is ergodic with respect to 
$\pi_{\mathrm{F}}(\cdot)$; see~\cite{Chr05} for conditions on the approximation that ensure that the ergodic theorem applies. Computational cost per iteration is reduced because for proposals that are rejected in the first \textbf{MH} step in Eq.~\cref{eq:damh}, and thus result in $\psi=\theta^{j}$, the second acceptance ratio in Eq.~\cref{eq:beta_da} involving the more expensive, fine target density $\pi_{\mathrm{F}}(\cdot)$ does not need to be evaluated again.
\begin{center}\vspace{0.5cm}
\fbox{%
  \parbox{0.975\textwidth}{
    \textbf{Algorithm 2. Delayed Acceptance (DA)}  \\[1ex]
    \textbf{function}: $ 
                               \left[\theta^{1},\ldots, \theta^{N}\right] 
                        =\mathbf{DA}\left(\pi_\mathrm{F}(\cdot),\pi_\mathrm{C}(\cdot), q(\cdot|\cdot),\theta^{0},N \right)  $\\[1ex]
\begin{tabular}{@{}l@{}p{0.875\textwidth}}
    \textbf{input: }  & target (fine) density $\pi_\mathrm{F}(\cdot)$, approximate (coarse) density $\pi_\mathrm{C}(\cdot)$, proposal kernel $q(\cdot|\cdot)$, initial state $\theta^{0}$, number of steps $N$\\
    \textbf{output: } & ordered list of states $\left[\theta^{1},\ldots, \theta^{N}\right]$ \big(or just the final state $\theta^{N}$\big)
    \end{tabular} \\[1ex]
\textbf{for} $j = 0$ to $N-1$ : 
	\begin{itemize}
		\item Given $\theta^{j}$, generate proposal $\psi$ by invoking one step of \textbf{MH} Alg.~1 for coarse target $\pi_\mathrm{C}$:
		\begin{equation}
		  \psi = \mathbf{MH}\left(\pi_\mathrm{C}(\cdot),q(\cdot|\cdot),\theta^{j},1 \right) . 
		  \label{eq:damh}
		\end{equation}
        \item Accept proposal $\psi$ as the next state, i.e. set $\theta^{j+1} = \psi$, with probability 
		\begin{equation}
                 \alpha(\psi|\theta^{j}) = \min \left\{1, \frac
                 {\pi_{\mathrm{F}}(\psi)q_\mathrm{C}(\theta^{j}|\psi)}
                 {\pi_{\mathrm{F}}(\theta^{j})q_\mathrm{C}(\psi|\theta^{j})} 
                 \right\}
                 \label{eq:beta_da}
         \end{equation}
         otherwise reject proposal $\psi$ and set $\theta^{j+1} = \theta^{j}$.  
	\end{itemize}
  }
}\vspace{0.5cm}
\end{center}

In the multilevel context with levels indexed by $\ell$, the original \textbf{DA} Alg.~2 is a two-level method. Denote the more accurate forward map that defines the fine posterior distribution $\pi_{\ell}(\theta_\ell| {\bf d}_\ell)$ by $\mathcal F_\ell$, and the less accurate forward map that defines the approximate (coarse) posterior distribution $\pi_{\ell-1}(\theta_\ell|{\bf d}_{\ell-1})$ by $\mathcal F_{\ell-1}$. Note that we also allow a possibly altered or reduced data set ${\bf d}_{\ell-1}$ on level $\ell-1$, but that the states in the two forward maps and in the two distributions are the same. Then setting $\pi_\mathrm{F}(\cdot)=\pi_{\ell}(\cdot| {\bf d}_\ell)$ and $\pi_\mathrm{C}(\cdot)=\pi_{\ell-1}(\cdot| {\bf d}_{\ell-1})$ in the call to \textbf{DA}  Alg.~2, such that
\[ 
\big[\theta_{\ell}^{1},\ldots, \theta_{\ell}^{N}\big] =\mathbf{DA}\left(\pi_{\ell}(\cdot| {\bf d}_\ell),\pi_{\ell-1}(\cdot| {\bf d}_{\ell-1}),q(\cdot|\cdot),\theta^{0},N \right), 
\]
computes a chain that is ergodic with respect to $\pi_{\ell}(\cdot| {\bf d}_\ell)$, asymptotically as $N\to \infty$. 

\textbf{DA} Alg.~2 actually allows for the  approximate, coarse posterior distribution to depend on the state of the chain. Denote the state-dependent, approximate forward map at state $\theta$ by $\mathcal F_{\ell-1,\theta}$ and the resulting approximate posterior density by $\pi_{\ell-1,\theta}(\cdot|{\bf d}_{\ell-1})$. For state-dependent approximations it is always desirable and easy to achieve (see \cite{cui_posteriori_2019}) that $\mathcal F_{\ell-1,\theta}(\theta)=\mathcal F_{\ell}(\theta)$, so that $\pi_{\ell-1,\theta}(\theta|{\bf d}_{\ell-1})=k \pi_{\ell}(\theta|{\bf d}_{\ell})$ with the normalising constant $k$ independent of state $\theta$. The acceptance probability Eq.~\cref{eq:beta_da} then has the explicit form
\begin{equation}
                 \alpha(\psi|\theta^{j}) = \min \left\{1, \frac
                 {\min\left\{ \pi_\mathrm{F}(\psi)q(\theta^{j}|\psi) , 
                 \pi_{\mathrm{C},\psi}(\theta^{j})q(\psi|\theta^{j})
                 \right\} } 
                 {\min\left\{ \pi_\mathrm{F}(\theta^{j})q(\psi|\theta^{j}) ,
                 \pi_{\mathrm{C},\theta_{\ell}^{j}}(\psi)q(\theta^{j}|\psi)
                 \right\} } 
                 \right\}.
                 \label{eq:beta_da_state-dep}
\end{equation}
For technical reasons, as explained in \cref{rem:state-dependent} below, we will not use state-dependent approximations, but rather restrict ourselves to fixed approximate forward maps that do not depend on the current state.

\subsection{Detailed Balance Beyond Two Levels} \label{sec:mlda_convergence}

We will now extend \textbf{DA} to randomised-length-subchains, to embedded state spaces at the coarser level, and finally to multiple levels. The resulting Markov chain on the finest level is shown to be in detalied balance with the target density.

\subsubsection{Randomised-Length-Subchain Surrogate Transition MCMC}\label{sec:rst}
 
When the approximate forward map does not depend on the current state -- for example, when using a fixed coarse discretization for a PDE -- the resulting approximate posterior density is a fixed \emph{surrogate} for the true posterior density, and Alg.~2 coincides with the surrogate transition method introduced by Liu \cite{Liu08}.  \Cref{lem:cf2} then implies that the acceptance probability in Eq.~\cref{eq:beta_da} is
\begin{equation}
                 \alpha(\psi|\theta^{j}) = 
                 \min \left\{1, \frac{\pi_{\mathrm{F}}(\psi)\pi_{\mathrm{C}}(\theta^{j}) } 
                 {\pi_{\mathrm{F}}(\theta^{j})\pi_{\mathrm{C}}(\psi) } 
                 \right\}\, ,
                 \label{eq:beta_st}
\end{equation}
since the Metropolis--Hastings step in Eq.~\cref{eq:damh} ensures that the effective proposal kernel $q_\mathrm{C}(\cdot|\cdot)$ is in detailed balance with the approximate density $\pi_{\mathrm{C}}(\cdot)$. 

We extend the surrogate transition method in two ways. As noted by Liu~\cite{Liu08}, multiple steps can be made with the surrogate, i.e. iterating the proposal and first accept/reject step Eq.~\cref{eq:damh} before performing the second accept/reject step with acceptance probability in Eq.~\cref{eq:beta_st}. We call the sequence of states generated by multiple steps of Eq.~\cref{eq:damh} a \emph{subchain}. Further, we consider subchains of random length, set according to a probability mass function (pmf) $p(\cdot)$ on the positive integers. In practice we set $J\in\mathbb{Z}^+$ and then set $p = \mathcal U(\{1,2,\ldots,J\})$, though note that a deterministic choice of subchain length is another special case. The utility of randomising the subchain length will become apparent in \cref{sec:VarianceReduction}. These extensions are included in Alg.~3.
\begin{center}\vspace{0.5cm}
\fbox{%
  \parbox{0.975\textwidth}{
    \textbf{Algorithm 3. Randomised-Length-Subchain Surrogate Transition (RST)}\\[1ex]  
        \textbf{function}: $ 
                               \left[\theta^{1},\ldots, \theta^{N}\right] 
                        =\mathbf{RST}\left(\pi_\mathrm{F}(\cdot),\pi_\mathrm{C}(\cdot), q(\cdot|\cdot),p(\cdot),\theta^{0},N \right) $\\[1ex]
\begin{tabular}{@{}l@{}p{0.88\textwidth}}
    \textbf{input: }  & target (fine) density $\pi_\mathrm{F}(\cdot)$, surrogate (coarse) density $\pi_\mathrm{C}(\cdot)$, proposal kernel $q(\cdot|\cdot)$, probability mass function $p(\cdot)$ over subchain length, initial state $\theta^{0}$, number of steps $N$\\
    \textbf{output: } & ordered list of states $\left[\theta^{1},\ldots, \theta^{N}\right]$ \big(or just the final state $\theta^{N}$\big)
    \end{tabular} \\[1ex]
\textbf{for} $j = 0$ to $N-1$ :         
	\begin{itemize}
	\item Draw the subchain length $n \sim p(\cdot)$.  
	\item Starting at $\theta^{j}$, generate subchain of length $n$ using \textbf{MH}  Alg.~1 to target $\pi_\mathrm{C}(\cdot)$:
	\begin{equation}
	   \psi = \mathbf{MH}\left(\pi_\mathrm{C}(\cdot),q(\cdot|\cdot),\theta^{j},n \right)  \label{eq:stmh}
	\end{equation}
	\item Accept the proposal $\psi$ as the next sample, i.e. set $\theta^{j+1} = \psi$, with probability
	\begin{equation}
	\alpha(\psi|\theta^{j}) = 
                 \min \left\{1, \frac{\pi_{\mathrm{F}}(\psi)\pi_{\mathrm{C}}(\theta^{j}) } 
                 {\pi_{\mathrm{F}}(\theta^{j})\pi_{\mathrm{C}}(\psi) } 
                 \right\}.
	\label{eq:beta_rst}
	\end{equation}
	 otherwise reject and set $\theta^{j+1} = \theta^{j}$.
	\end{itemize}
	}
	}
\vspace{0.5cm}	
\end{center}

We will show that Alg.~3 satisfies detailed balance using \cref{lem:commute}, needed also later.
\begin{definition} 
 We define composition of Markov kernels $K_1$ and $K_2$ in the usual way \cite{giry1982categorical} by
 \[ ( K_1\circ K_2)(\theta|\psi) = \int K_1(\theta|\phi) K_2(\phi|\psi) \dd \phi. \]
\end{definition}
Composition is associative, by Tonelli's theorem, so, by induction, the composition of multiple Markov kernels is well defined. The composition of a kernel $K$ with itself will be denoted $K^2$, while the composition of $n$ lots of the kernel $K$ is denoted $K^n$, so the notation is the same as for composition of transition matrices defining Markov processes with a finite state space.
\begin{lemma} \label{lem:commute}
  Let $K_1(x|y)$ and $K_2(x|y)$ be two transition kernels that are in detailed balance with a density $\pi$ and that commute. Then their composition 
 $(K_1\circ K_2)$ is also in detailed balance with $\pi$.
\end{lemma}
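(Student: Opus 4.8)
The plan is to verify the detailed balance equation $\pi(x)(K_1\circ K_2)(y|x) = \pi(y)(K_1\circ K_2)(x|y)$ directly from the definition of composition, using the two given hypotheses in sequence. Writing out the left-hand side,
\[
\pi(x)(K_1\circ K_2)(y|x) = \int \pi(x) K_1(y|\phi) K_2(\phi|x)\, \dd\phi,
\]
I would first apply detailed balance of $K_2$ with $\pi$ to rewrite $\pi(x)K_2(\phi|x) = \pi(\phi)K_2(x|\phi)$, so the integrand becomes $K_1(y|\phi)\pi(\phi)K_2(x|\phi)$. Then I would apply detailed balance of $K_1$ with $\pi$ to rewrite $\pi(\phi)K_1(y|\phi) = \pi(y)K_1(\phi|y)$, giving integrand $\pi(y)K_1(\phi|y)K_2(x|\phi)$. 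Pulling $\pi(y)$ out, the integral is $\pi(y)\int K_1(\phi|y)K_2(x|\phi)\,\dd\phi = \pi(y)(K_2\circ K_1)(x|y)$. At this point the commutativity hypothesis $K_1\circ K_2 = K_2\circ K_1$ closes the argument: the right-hand side equals $\pi(y)(K_1\circ K_2)(x|y)$, which is exactly detailed balance for the composition.

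The key steps in order: (1) expand the composition using the definition; (2) substitute detailed balance for the inner kernel $K_2$; (3) substitute detailed balance for the outer kernel $K_1$; (4) recognise the resulting integral as $(K_2\circ K_1)$ applied with arguments swapped; (5) invoke commutativity to replace $K_2\circ K_1$ by $K_1\circ K_2$. Each manipulation is an application of Tonelli's theorem (already invoked in the excerpt for associativity) to justify interchanging the $\pi$ factors with the integral over $\phi$, which is legitimate since all integrands are nonnegative.

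The main obstacle is essentially bookkeeping rather than mathematical depth: one must be careful about which variable each detailed-balance identity is applied to (the integration variable $\phi$ versus the external arguments $x,y$) and in which order, since applying them in the wrong order produces $(K_1\circ K_2)$ with the wrong orientation and commutativity would not help. A secondary point worth a sentence is that detailed balance should be read as an identity of measures (holding $\pi$-almost everywhere, or as an identity of the joint kernels), so that the pointwise substitutions inside the integral are valid; this mirrors the "almost everywhere" caveat already appearing in the proof of \cref{lem:cf2}. Once the orientation is tracked correctly, the proof is three lines.
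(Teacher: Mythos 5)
Your proof is correct and is essentially the paper's argument: a direct computation expanding the composition, substituting detailed balance for each factor kernel, and invoking commutativity. The only (cosmetic) difference is ordering — the paper swaps $K_1\circ K_2$ for $K_2\circ K_1$ inside the integral at the start, whereas you apply commutativity at the end after arriving at $\pi(y)(K_2\circ K_1)(x|y)$ — which changes nothing of substance.
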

\begin{proof}
\begin{align*}
 & \\[-6.5ex]
  \qquad\quad\pi(\psi)( K_1\circ K_2)(\theta|\psi) &= \pi(\psi) \int K_1(\theta|\phi) K_2(\phi|\psi) \dd \phi                    = \pi(\psi) \int K_2(\theta|\phi) K_1(\phi|\psi)  \dd \phi\\
                             &= \pi(\psi) \int K_2(\phi|\theta)\frac{\pi(\theta)}{\pi(\phi)} K_1(\psi|\phi)\frac{\pi(\phi)}{\pi(\psi)}   \dd \phi\\
                             &= \pi(\theta) \int K_2(\phi|\theta) K_1(\psi|\phi)   \dd \phi
                            = \pi(\theta) (K_1\circ K_2)(\psi|\theta)\\[-4.5ex]
\end{align*}
\end{proof}

\begin{lemma} \label{lem:rst}
 Alg.~3 simulates a Markov chain that is in detailed balance with $\pi_{\mathrm{F}}(\cdot)$.
\end{lemma}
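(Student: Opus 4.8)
The plan is to show that the one-step transition kernel of Alg.~3, call it $K_{\mathrm{RST}}$, is in detailed balance with $\pi_{\mathrm{F}}$. First I would write $K_{\mathrm{RST}}$ explicitly as a mixture over subchain lengths: $K_{\mathrm{RST}}(\theta|\psi) = \sum_{n} p(n)\, K^{(n)}(\theta|\psi)$, where $K^{(n)}$ is the kernel that generates a length-$n$ surrogate subchain via \textbf{MH} targeting $\pi_{\mathrm{C}}$ and then applies the second accept/reject step with probability \cref{eq:beta_rst}. Since a finite (or countable) mixture of kernels each in detailed balance with $\pi_{\mathrm{F}}$ is itself in detailed balance with $\pi_{\mathrm{F}}$ (the detailed balance relation is linear in the kernel), it suffices to prove that each $K^{(n)}$ is in detailed balance with $\pi_{\mathrm{F}}$.

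Next I would analyse $K^{(n)}$. Let $Q_{\mathrm{C}}$ denote the one-step \textbf{MH} kernel targeting $\pi_{\mathrm{C}}$ with base proposal $q$; by the standard Metropolis--Hastings argument (\cite[Sec.~5.3]{Liu08}), $Q_{\mathrm{C}}$ is in detailed balance with $\pi_{\mathrm{C}}$, and hence so is its $n$-fold composition $Q_{\mathrm{C}}^{n}$ — this is just the remark after the definition of composition, since a kernel trivially commutes with itself, so \cref{lem:commute} applies inductively (or one checks it directly). Thus the effective proposal kernel for the $n$-step subchain, $q_{\mathrm{C}}^{(n)} := Q_{\mathrm{C}}^{n}$, is in detailed balance with $\pi_{\mathrm{C}}$. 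Now $K^{(n)}$ is precisely the Metropolis--Hastings kernel with proposal $q_{\mathrm{C}}^{(n)}$ and target $\pi_{\mathrm{F}}$, using acceptance probability $\alpha(\psi|\theta) = \min\{1, \pi_{\mathrm{F}}(\psi) q_{\mathrm{C}}^{(n)}(\theta|\psi) / (\pi_{\mathrm{F}}(\theta) q_{\mathrm{C}}^{(n)}(\psi|\theta))\}$. By \cref{lem:cf2}, since $q_{\mathrm{C}}^{(n)}$ is in detailed balance with $\pi_{\mathrm{C}}$, this acceptance probability equals the expression in \cref{eq:beta_rst}, which is exactly what Alg.~3 uses. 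Therefore $K^{(n)}$ is a genuine Metropolis--Hastings kernel targeting $\pi_{\mathrm{F}}$ and is in detailed balance with $\pi_{\mathrm{F}}$.

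Combining the two steps: each $K^{(n)}$ is in detailed balance with $\pi_{\mathrm{F}}$, hence so is the mixture $K_{\mathrm{RST}} = \sum_n p(n) K^{(n)}$, which is the kernel simulated by Alg.~3. This completes the proof.

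I expect the only delicate point to be the bookkeeping that identifies the "effective proposal" $q_{\mathrm{C}}^{(n)}$ of the subchain with the composition $Q_{\mathrm{C}}^{n}$, and the observation that \cref{lem:cf2} can be applied with $\pi^* = \pi_{\mathrm{C}}$ even though $q_{\mathrm{C}}^{(n)}$ is not a simple proposal density but itself a Metropolis--Hastings kernel (which may place an atom at $\theta$ when all subchain steps are rejected). This is handled by noting that \cref{lem:cf2} and the standard Metropolis--Hastings detailed balance argument hold verbatim for a proposal \emph{kernel} in detailed balance with $\pi^*$ — the atomic part of $q_{\mathrm{C}}^{(n)}$ is symmetric and cancels, and the off-diagonal part satisfies the detailed balance identity used in the proof of \cref{lem:cf2}. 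No state-dependence of $\pi_{\mathrm{C}}$ is assumed here, which is exactly why \cref{lem:cf2} applies cleanly.
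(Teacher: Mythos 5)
Your proof is correct and takes essentially the same route as the paper: the one-step surrogate Metropolis--Hastings kernel is in detailed balance with $\pi_\mathrm{C}$, \cref{lem:commute} (applied inductively, since the kernel commutes with itself) gives the same for the $n$-fold composition $q_\mathrm{C}^n$, and \cref{lem:cf2} identifies the acceptance probability with \cref{eq:beta_rst}. The only difference is the order of bookkeeping --- the paper forms the mixture proposal $\sum_{n} p(n)\,q_\mathrm{C}^n$ first and applies \cref{lem:cf2} once, whereas you apply \cref{lem:cf2} for each fixed $n$ and then mix the resulting Metropolis--Hastings kernels, which is equivalent precisely because the acceptance probability in \cref{eq:beta_rst} does not depend on $n$; your explicit treatment of the atom of $q_\mathrm{C}^n$ at the current state is a detail the paper leaves implicit.
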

\begin{proof}
  Recall that the effective density $q_{\mathrm{C}}(\cdot|\cdot)$ for proposals drawn according to Alg.~2 Eq.~\cref{eq:damh} is in detailed balance with $\pi_{\mathrm{C}}(\cdot)$. Since $q_\mathrm{C}$ clearly commutes with itself, using \cref{lem:commute}, it follows by induction that $q_{\mathrm{C}}^n(\cdot|\cdot)$, (i.e. $q_{\mathrm{C}}$ composed $n$ times with itself) is in detailed balance with $\pi_{\mathrm{C}}(\cdot)$ for any $n$. Hence, the effective proposal density induced by Alg.~3 Eq.~\cref{eq:stmh}, namely the mixture kernel $\sum_{n\in\mathbb{Z}^+} p(n) q_{\mathrm{C}}^n(\cdot|\cdot)$
  is also in detailed balance with $\pi_{\mathrm{C}}(\cdot)$.
  
  Finally, the acceptance probability in Alg.~3 Eq.~\cref{eq:beta_rst} for target density $\pi_\mathrm{F}(\cdot)$ follows from \cref{lem:cf2}, since the proposal kernel is in detailed balance with $\pi_\mathrm{C}(\cdot)$. Consequently, Alg.~3 produces a chain in detailed balance with $\pi_\mathrm{F}(\cdot)$.
\end{proof}

\begin{remark}
Choosing a multinomial pmf over the subchain length, with $p(J) = 1$ and $p(\lnot J) = 0$, implies that \cref{lem:rst} is also valid for the special case of a fixed subchain length $J_C$.
\end{remark}

\begin{remark}\label{rem:state-dependent}
We do not yet have a version of \cref{lem:rst} for fully state-dependent approximations, which is why we restrict here to state-independent surrogates.
\end{remark}

\begin{remark}
 If the densities of the coarse and fine posterior distributions in Alg.~3 are with respect to the same prior distribution, i.e. $\pi_\mathrm{F}(\theta)= \pi_{\ell}(\theta|{\bf d}_{\ell})\propto \pi_{\ell}({\bf d}_{\ell}|\theta)\pi_\mathrm{p}(\theta)$ and  $\pi_\mathrm{C}(\theta)= \pi_{\ell-1}(\theta|{\bf d}_{\ell-1})\propto \pi_{\ell-1}({\bf d}_{\ell-1}|\theta)\pi_\mathrm{p}(\theta)$, the acceptance probability in Alg.~3 Eq.~\cref{eq:beta_rst} is equal to
 	\begin{equation}
	\alpha(\psi|\theta^{j}) = 
                 \min \left\{1, \frac{\pi_{\ell}\big(\mathbf{d}_{\ell}|\psi\big)\pi_{\ell-1}\big(\mathbf{d}_{\ell-1}|\theta^{j}\big) } 
                 {\pi_{\ell}\big(\mathbf{d}_{\ell}|\theta^{j}\big)\pi_{\ell-1}\big(\mathbf{d}_{\ell-1}|\psi\big) } 
                 \right\}.
	\label{eq:beta_rstv1}
	\end{equation}
\end{remark}

\subsubsection{Different Fine and Coarse States}\label{sec:embedding}

In delayed acceptance Alg.~2, and hence also in the randomised surrogate transition Alg.~3, the state in the fine and coarse target distributions is the same. In the MLMCMC of Dodwell \textit{et al.}  \cite{dodwell_hierarchical_2015} different levels can have different states, which is natural when using e.g. a hierarchy of FEM discretisations with different levels of mesh refinement. In this context, the states at different levels form a hierarchy of embedded spaces, where the state vector at any given level is part of the state vector at the next finer level. Hence, in a two-level hierarchy as described above, the (fine) state $\theta$ can be partitioned into ``coarse modes'' (or ``components'') denoted $\theta_{\mathrm{C}}$ and ``fine modes'' $\theta_{\mathrm{F}}$, so that $\theta=\left( \theta_{\mathrm{F}},\theta_{\mathrm{C}}\right) $. The coarse modes $\theta_\mathrm{C}$ are the components of the state vector on the coarse, approximate level targeted by $\pi_\mathrm{C}$, while the fine target distribution $\pi_\mathrm{F}$ also depends on the fine modes $\theta_\mathrm{F}$.

The randomised surrogate transition Alg.~3 is easily extended to allow this structure, as shown in Alg.~4 below, where surrogate transition is only used to propose the states of the coarse modes,  while the fine modes are drawn from some additional proposal distribution. The composite of the fine and coarse proposals then forms the proposed state at the fine level. For this extension it is important that the fine modes are proposed independently of the coarse modes to ensure detailed balance, as shown below.

\begin{lemma} \label{lem:tlda}
Two Level Delayed Acceptance in Alg.~4 generates a chain in detailed balance with $\pi_\mathrm{F}$.
\end{lemma}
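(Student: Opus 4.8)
The plan is to set up the state space as the product $\Theta = \Theta_{\mathrm F}\times\Theta_{\mathrm C}$ with a generic state written $\theta = (\theta_{\mathrm F},\theta_{\mathrm C})$, and to describe Alg.~4 as a composition of two Markov kernels acting on this product space: a kernel $R$ that runs the randomised-length surrogate-transition step on the coarse modes (exactly Alg.~3 applied to the $\theta_{\mathrm C}$-component, targeting $\pi_{\mathrm C}$) while leaving the fine modes fixed, and a kernel $Q_{\mathrm F}$ that refreshes the fine modes $\theta_{\mathrm F}$ from the additional fine proposal while leaving the coarse modes fixed. The composite proposal is then passed through the single fine accept/reject step with probability \cref{eq:beta_rst}. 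So I would first write down this decomposition explicitly, then handle the effective proposal kernel, then invoke \cref{lem:cf2}.

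The key steps, in order: (1) By \cref{lem:rst}, the coarse-mode surrogate-transition move induces an effective proposal kernel on $\theta_{\mathrm C}$ that is in detailed balance with $\pi_{\mathrm C}(\cdot)$; lifted to the product space (identity on $\theta_{\mathrm F}$), call this lifted kernel $\widetilde q_{\mathrm C}$. (2) The fine-mode refreshment kernel $Q_{\mathrm F}$ must be chosen so that it too is in detailed balance with a distribution on the product space whose $\theta_{\mathrm C}$-marginal is the same as that used in step (1) — concretely, since $Q_{\mathrm F}$ only touches $\theta_{\mathrm F}$ and the paper stipulates the fine modes are proposed \emph{independently} of the coarse modes, $Q_{\mathrm F}$ is in detailed balance with (coarse marginal) $\times$ (whatever proposal stationary law governs $\theta_{\mathrm F}$); the independence is exactly what makes the product structure factor through cleanly. (3) Show $\widetilde q_{\mathrm C}$ and $Q_{\mathrm F}$ commute — this is immediate because they act on complementary, independent blocks of coordinates, so the order in which they are applied does not matter (their composition in either order has the same kernel, a routine Fubini/Tonelli interchange on the product). (4) Apply \cref{lem:commute} to conclude the composite proposal kernel $\widetilde q_{\mathrm C}\circ Q_{\mathrm F}$ is in detailed balance with a distribution $\pi^\ast$ on the product space; by construction $\pi^\ast$ has $\theta_{\mathrm C}$-marginal proportional to $\pi_{\mathrm C}$ in the ratio sense needed, so that $\pi_{\mathrm F}(\psi)/\pi^\ast(\psi)$ reduces to $\pi_{\mathrm F}(\psi)/\pi_{\mathrm C}(\psi_{\mathrm C})$ up to the constant that cancels in \cref{eq:beta_rst}. (5) Finally invoke \cref{lem:cf2} with target $\pi_{\mathrm F}$ and proposal in detailed balance with $\pi^\ast$: the acceptance probability \cref{eq:beta_rst} is then the Metropolis--Hastings ratio, so the full Alg.~4 kernel is in detailed balance with $\pi_{\mathrm F}$.

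I expect the main obstacle to be step (2)/(4): making precise \emph{which} distribution $\pi^\ast$ the composite proposal is reversible with respect to, and checking that the acceptance ratio \cref{eq:beta_rst} — which only involves $\pi_{\mathrm F}$ and $\pi_{\mathrm C}$, not $\pi^\ast$ — is in fact the correct Metropolis correction. The delicate point is that the fine-mode refreshment contributes a factor to $\pi^\ast$ that must cancel in the numerator and denominator of \cref{eq:beta_rst}; this cancellation is exactly what the independence assumption on the fine-mode proposal buys us, and it is why the lemma statement and the surrounding text emphasise that the fine modes are drawn independently of the coarse modes. I would be careful to spell out that if the fine-mode proposal were correlated with $\theta_{\mathrm C}$, the induced $\pi^\ast$ would no longer have the product form and \cref{lem:cf2} would produce extra non-cancelling terms, breaking detailed balance. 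A secondary, more routine concern is measure-theoretic bookkeeping on the product space — that the lifted kernels are well-defined and that the commutation identity holds almost everywhere — but this follows the same pattern as \cref{lem:commute} and Tonelli's theorem already invoked in the excerpt.
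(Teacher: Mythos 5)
Your skeleton is the same as the paper's: write the coarse surrogate-transition move and the fine-mode refreshment as block kernels on the composite state $(\theta_\mathrm{F},\theta_\mathrm{C})$, note that they commute because they act on complementary blocks, apply \cref{lem:commute} (together with the mixture over subchain lengths) to obtain an effective proposal in detailed balance with some $\pi^\ast$, and finish with \cref{lem:cf2}; the relevant acceptance probability is \cref{eq:beta_tlda}, the composite-state version of \cref{eq:beta_rst}. The genuine gap is at your steps (2) and (4), exactly the point you flag as delicate: independence of the fine-mode proposal from the coarse modes buys you the block structure and the commutation, but it does \emph{not} make the fine-mode factor of $\pi^\ast$ cancel. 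Suppose $q_\mathrm{F}$ is reversible with respect to some law $u$ and take $\pi^\ast(\theta)=u(\theta_\mathrm{F})\,\pi_\mathrm{C}(\theta_\mathrm{C})$. Then \cref{lem:cf2} yields the acceptance probability
\[
\min\left\{1,\ \frac{\pi_\mathrm{F}(\psi)\,u(\theta_\mathrm{F})\,\pi_\mathrm{C}(\theta_\mathrm{C})}{\pi_\mathrm{F}(\theta)\,u(\psi_\mathrm{F})\,\pi_\mathrm{C}(\psi_\mathrm{C})}\right\},
\]
and since $\psi_\mathrm{F}\neq\theta_\mathrm{F}$ in general, the ratio $u(\theta_\mathrm{F})/u(\psi_\mathrm{F})$ is not a constant and does not cancel; equivalently, a direct Hastings computation leaves the factor $q_\mathrm{F}(\theta_\mathrm{F}|\psi_\mathrm{F})/q_\mathrm{F}(\psi_\mathrm{F}|\theta_\mathrm{F})$ in the ratio. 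So, as written, your argument does not establish detailed balance with respect to $\pi_\mathrm{F}$ for the acceptance probability \cref{eq:beta_tlda}.

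The paper closes this point differently: it takes $\pi^\ast$ to be $\pi_\mathrm{C}$ itself, regarded as a function on the composite state that depends only on the coarse modes, and argues that the fine block kernel $K_\mathrm{F}$, which leaves $\theta_\mathrm{C}$ unchanged, is in detailed balance with this density; then both block kernels are balanced with respect to one and the same density, \cref{lem:commute} applies to their composition and to the mixture over subchain lengths, and \cref{lem:cf2} gives \cref{eq:beta_tlda} with no leftover factor to dispose of. Observe that detailed balance of $K_\mathrm{F}$ with a density that is constant in $\theta_\mathrm{F}$ amounts to requiring $q_\mathrm{F}(\theta_\mathrm{F}|\psi_\mathrm{F})=q_\mathrm{F}(\psi_\mathrm{F}|\theta_\mathrm{F})$, i.e.\ a symmetric fine-mode proposal (or else the $q_\mathrm{F}$ Hastings ratio must be carried into the acceptance probability). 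That property of $q_\mathrm{F}$ — not the independence of the fine-mode draw from the coarse modes — is what delivers the cancellation you appeal to; independence only guarantees the product/commutation structure. With that correction, the rest of your outline (lifting \cref{lem:rst} for the coarse block, commutation, \cref{lem:commute}, \cref{lem:cf2}) coincides with the paper's proof.
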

\begin{proof}
As noted in the proof of \cref{lem:rst}, the proposal density $q_\mathrm{C}$ induced by the surrogate transition step in Alg.~4 Eq.~\cref{eq:tldamh} is in detailed balance with the coarse target density $\pi_\mathrm{C}(\cdot)$ over $\theta_\mathrm{C}$. As a kernel on the composite state $\theta=(\theta_\mathrm{F},\theta_\mathrm{C})$ we can write the coarse proposal as
\[ 
K_\mathrm{C} = \left[
    \begin{array}{c;{2pt/2pt}c}
        I & 0 \\ \hdashline[2pt/2pt]
        0 & q_\mathrm{C} 
    \end{array}
\right]  
\]
where $I$ denotes the identity of appropriate dimension.
Similarly, the fine proposal Eq.~\cref{eq:tldafp} on the composite state has kernel
\[ 
K_\mathrm{F} = \left[
    \begin{array}{c;{2pt/2pt}c}
        q_\mathrm{F} & 0 \\ \hdashline[2pt/2pt]
        0 &  I
    \end{array}
\right] . 
\]
Since $K_\mathrm{F}$ does not change the coarse modes, it trivially 
is in detailed balance with $\pi_\mathrm{C}(\cdot)$. Further, it is easy to check that $K_\mathrm{C}$ and $K_\mathrm{F}$ commute. Hence, by \cref{lem:commute} the composition  $(K_\mathrm{F}\circ K_\mathrm{C}^n)$ is also in detailed balance with  $\pi_\mathrm{C}(\cdot)$ and so is the effective proposal kernel $\sum_{n\in\mathbb{Z}^+}p(n)(K_\mathrm{F}\circ K_\mathrm{C}^n)$ for drawing $\psi=\left(\psi_\mathrm{F},\psi_\mathrm{C} \right)$ according to Alg.~4 Eqs.~\cref{eq:tldamh} and \cref{eq:tldafp}. The acceptance probability in Alg.~4 Eq.~\cref{eq:beta_tlda} then follows again from \cref{lem:cf2} and the chain produced by Alg.~4 is in detailed balance with $\pi_\mathrm{F}(\cdot)$, as desired.
\end{proof}
Note that the Randomised Surrogate Transition Alg.~3 is a special case of Alg.~4 with  $\theta^{j}=\theta_\mathrm{C}^{j}$, i.e. $\theta_\mathrm{F}^{j}$ is empty, and correspondingly $q_{\mathrm{F}}(\cdot|\cdot)$ is the (trivial) proposal on the empty space. 

\begin{center}\vspace{0.5cm}
\fbox{%
  \parbox{0.975\textwidth}{
    \textbf{Algorithm 4. Two Level Delayed Acceptance (TLDA)}\\[1ex]  
        \textbf{function}: $ 
                               \left[\theta^{1},\ldots, \theta^{N}\right] 
                        = \mathbf{TLDA}\left(\pi_\mathrm{F}(\cdot),\pi_\mathrm{C}(\cdot), q(\cdot|\cdot),q_{\mathrm{F}}(\cdot|\cdot),p(\cdot), \theta^{0},N \right)$\\[1ex]
\begin{tabular}{@{}l@{}p{0.88\textwidth}}
    \textbf{input: }  & target (fine) density $\pi_\mathrm{F}(\cdot)$, surrogate (coarse) density $\pi_\mathrm{C}(\cdot)$, proposal kernel $q(\cdot|\cdot)$ on coarse modes, proposal kernel  $q_{\mathrm{F}}(\cdot|\cdot)$ on fine modes, probability mass function $p(\cdot)$ over subchain length, initial state $\theta^{0}$, number of steps $N$\\
    \textbf{output: } & ordered list of states $\left[\theta^{1},\ldots, \theta^{N}\right]$ \big(or just the final state $\theta^{N}$\big)
    \end{tabular} \\[1ex]
\textbf{for} $j = 0$ to $N-1$ :        
	\begin{itemize}
	
	\item Draw the subchain length $n \sim p(\cdot)$.  
	\item Starting at $\theta_\mathrm{C}^{j}$, generate subchain of length $n$ using \textbf{MH} Alg.~1 to target $\pi_\mathrm{C}(\cdot)$:
	\begin{equation}
	  \psi_\mathrm{C} = \mathbf{MH}\left(\pi_\mathrm{C}(\cdot),q(\cdot|\cdot),\theta_\mathrm{C}^{j},n \right) 
	  \label{eq:tldamh}
	\end{equation}
	\item Draw the fine-mode proposal 
	\begin{equation}
	   \psi_\mathrm{F} \sim q_{\mathrm{F}}(\cdot|\theta_\mathrm{F}^{j}) 
	   \label{eq:tldafp}
	\end{equation}
	\item Accept proposal $\psi=(\psi_\mathrm{F},\psi_\mathrm{C})$ as next sample, i.e., set $\theta^{j+1} = \psi$, with probability
	\begin{equation}
	\alpha(\psi|\theta^{j}) = 
                 \min \left\{1, \frac{\pi_{\mathrm{F}}(\psi)\pi_{\mathrm{C}}(\theta_\mathrm{C}^{j}) } 
                 {\pi_{\mathrm{F}}(\theta^{j})\pi_{\mathrm{C}}(\psi_\mathrm{C}) } 
                 \right\}.
	\label{eq:beta_tlda}
	\end{equation}
	 otherwise reject and set $\theta^{j+1} = \theta^{j}$.
	\end{itemize}
	}
	}
\vspace{0.5cm}
\end{center}

\subsubsection{Multilevel Delayed Acceptance}\label{sec:MLDA}

The multilevel delayed acceptance algorithm is a recursive version of \textbf{TLDA} in which instead of invoking Metropolis--Hastings to generate a subchain at the coarser levels the algorithm is recursively invoked again (except for the coarsest level $\ell=0$), leading to a \textit{hierarchical} multilevel delayed acceptance algorithm, which admits an arbitrary number of model levels $L$. The flexibility with respect to the depth of the model hierarchy and the subchain lengths allows for tailoring the algorithm to various objectives, including the reduction of variance (see \cref{sec:VarianceReduction}) or increasing the effective sample size (see \cref{sec:gravity}).

To be more precise, \textbf{MLDA} Alg.~5 below is called on the most accurate, finest level $L$. Then, for levels $1\leq\ell\leq L$ it generates a subchain at level $\ell-1$ as in \textbf{TLDA}, by recursively invoking \textbf{MLDA} on level $\ell-1$, until the coarsest level $\ell=0$ is reached where plain \textbf{MH} in invoked.
Required for \textbf{MLDA} are the hierarchy of density functions $\pi_{0}(\cdot),\ldots,\pi_{L}(\cdot)$ along with a coarsest-level proposal $q_0$, partitions into coarse and fine modes at each level, fine-mode proposals $q_{1,\mathrm{F}},\ldots,q_{L,\mathrm{F}}$ and probability mass functions $p_{1}(\cdot),\ldots,p_{L}(\cdot)$ over the subchain lengths on levels 0 to $L-1$. Note that the fine-mode proposals are used to draw the \textit{additional} finer modes on each level $1 \leq \ell \leq L$, to construct a hierarchy of embedded spaces as explained in \cref{sec:embedding}. The algorithm is illustrated conceptually in \cref{fig:mlda-concept}.

\begin{center}\vspace{0.5cm}
\fbox{%
  \parbox{0.975\textwidth}{
    \textbf{Algorithm 5. Multilevel Delayed Acceptance (MLDA):}\\[1ex]
            \textbf{function}: $ 
                               \left[\theta_\ell^{1},\ldots, \theta_\ell^{N}\right]
                        =\mathbf{MLDA}\left( \left\{\pi_{k}\right\}_{k=0}^{\ell}, q_0,\left\{q_{k,\mathrm{F}}\right\}_{k=1}^{\ell}, \left\{p_{k}\right\}_{k=1}^{\ell},\theta_\ell^{0},\ell,N \right)  $\\[1ex]
\begin{tabular}{@{}l@{}p{0.88\textwidth}}
    \textbf{input: }  & target densities $\pi_{0}(\cdot),\ldots\pi_{\ell}(\cdot)$, proposal densities $q_0(\cdot|\cdot)$ and  $q_{1,\mathrm{F}}(\cdot|\cdot),\ldots,q_{\ell,\mathrm{F}}$, probability mass functions $p_{1}(\cdot),\ldots,p_{\ell}(\cdot)$ over subchain lengths on levels $0$ to $\ell-1$, initial state $\theta_\ell^{0}$, current level index $\ell$, number of steps $N$\\
    \textbf{output: } & ordered list of states $[\theta_\ell^{1},\ldots, \theta_\ell^{N}]$ at level $\ell$ \big(or just the final state $\theta_\ell^{N}$\big)
    \end{tabular} \\[1ex]
\textbf{for} $j = 0$ to $N-1$ : 
	\begin{itemize}

	\item Draw the subchain length $n_{\ell} \sim p_{\ell}(\cdot)$ for level $\ell-1$.  
	\item Starting at $\theta_{\ell,\mathrm{C}}^{j}$, generate a subchain of length $n_{\ell}$ on level $\ell-1$:
	\begin{itemize}
	 \item If $\ell=1$, use the Metropolis--Hastings algorithm to generate the subchain
	 	\[ \psi_\mathrm{C} = \mathbf{MH}\left(\pi_0(\cdot),q_0(\cdot,\cdot),\theta_{1,\mathrm{C}}^{j},n_1 \right) . \]
    \item If $\ell>1$, generate the subchain by (recursively) calling \textbf{MLDA}
        \[ \psi_\mathrm{C} = \mathbf{MLDA}\left( \left\{\pi_{k}(\cdot)\right\}_{k=0}^{\ell-1},q_0(\cdot|\cdot),\left\{q_{k,\mathrm{F}}\right\}_{k=1}^{\ell-1},\left\{p_{k}\right\}_{k=1}^{\ell-1},\theta_{\ell,\mathrm{C}}^{j},\ell-1,n_{\ell} \right) . \]
	\end{itemize}
	\item Draw the fine-mode proposal 
	\ $\psi_\mathrm{F} \sim q_{\ell,\mathrm{F}}\big(\cdot|\theta_{\ell,\mathrm{F}}^{j}\big).$
	\item Accept proposal $\psi=(\psi_\mathrm{F},\psi_\mathrm{C})$ as next sample, i.e., set $\theta_\ell^{j+1} = \psi$, with probability
	\begin{equation}
	\alpha(\psi|\theta^{j}) = 
                 \min \left\{1, \frac{\pi_{\ell}(\psi)\pi_{\ell-1}\big(\theta_{\ell,\mathrm{C}}^{j}\big) } 
                 {\pi_{\ell}\big(\theta_{\ell}^{j}\big)\pi_{\ell-1}(\psi_\mathrm{C}) } 
                 \right\}
	\label{eq:beta_mlda}
	\end{equation}
	 otherwise reject and set $\theta_\ell^{j+1} = \theta_\ell^{j}$.
	 
	\end{itemize}
	}
	}
\vspace{0.5cm}	
\end{center}

\begin{figure}[t]
    \centering
    \includegraphics[width=0.8\linewidth]{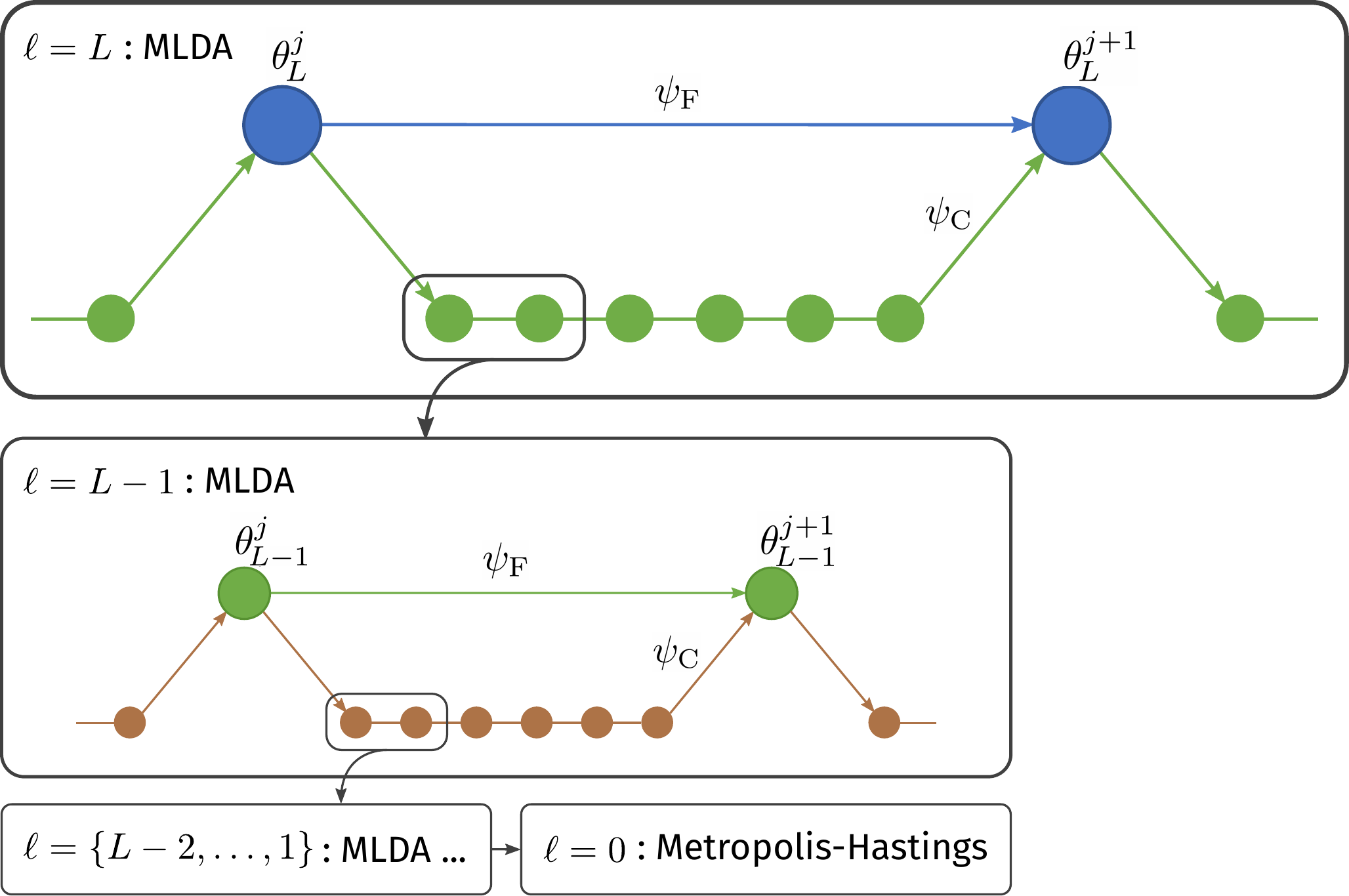}
    \caption{The MLDA algorithm sampling with a model hierarchy with $L$ levels. The MLDA sampler is employed recursively on each level $\ell > 0$, while on level $\ell=0$, any Metropolis--Hastings algorithm can be used. On level $\ell=L$, the MLDA sampler generates a Markov chain in detailed balance with $\pi_L$, according to \cref{thm:mlda}. On each level $\ell < L$, the respective samplers generate proposals for the coarse modes $\psi_\mathrm{C}$ of the next-finer level.}
    \label{fig:mlda-concept}
\end{figure}

A chain of length $N$ at level $L$ is then produced by calling
\begin{equation}
   \left[\theta_L^{1},\ldots, \theta_L^{N}\right]
                        =\mathbf{MLDA}\left( \left\{\pi_{k}\right\}_{k=0}^{L}, q_0,\left\{q_{k,\mathrm{F}}\right\}_{k=1}^{L}, \left\{p_{k}\right\}_{k=1}^{L},\theta_L^{0},L,N \right).
  \label{eq:mldacall}
\end{equation}
We can now state the main theoretical result of paper. 
\begin{theorem}\label{thm:mlda}
Multilevel Delayed Acceptance in Alg. 5, invoked as in~\cref{eq:mldacall}, generates a Markov chain that is in detailed balance with $\pi_L$.
\end{theorem}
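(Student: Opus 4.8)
The plan is to prove \cref{thm:mlda} by induction on the level index $\ell$, with the inductive hypothesis that the call to \textbf{MLDA} at level $\ell$ (for any number of steps $N$ and any initial state) realises a single fixed, homogeneous transition kernel $K_\ell$ that is in detailed balance with $\pi_\ell$. That \textbf{MLDA} at level $\ell$ does simulate such a fixed kernel is immediate, since -- exactly as for \textbf{MH} Alg.~1 -- each step depends only on the current state; the only content is detailed balance, which by \cref{sec:basic_mcmc} is precisely what is needed to conclude $\pi_L$-ergodicity at $\ell=L$.

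For the base case $\ell=1$, \textbf{MLDA} generates its level-$0$ subchain by plain Metropolis--Hastings targeting $\pi_0$, so it coincides exactly with \textbf{TLDA} Alg.~4 applied with $\pi_\mathrm{F}=\pi_1$, $\pi_\mathrm{C}=\pi_0$, coarse-mode proposal $q_0$, fine-mode proposal $q_{1,\mathrm{F}}$, and subchain pmf $p_1$; in particular the acceptance probability \cref{eq:beta_mlda} specialises to \cref{eq:beta_tlda}. Hence $K_1$ is in detailed balance with $\pi_1$ by \cref{lem:tlda}.

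For the inductive step, fix $\ell\geq 2$ and assume $K_{\ell-1}$ is in detailed balance with $\pi_{\ell-1}$. The observation that drives the proof is that \textbf{MLDA} at level $\ell$ has exactly the structure of \textbf{TLDA} Alg.~4 -- the coarse modes $\theta_{\ell,\mathrm{C}}$ of level $\ell$ being the full state of level $\ell-1$, and the fine modes $\theta_{\ell,\mathrm{F}}$ proposed independently via $q_{\ell,\mathrm{F}}$ -- the only difference being that the coarse-modes subchain is now $n_\ell$ steps of \textbf{MLDA} at level $\ell-1$, i.e. the kernel $K_{\ell-1}^{n_\ell}$, in place of $n_\ell$ Metropolis--Hastings steps with effective kernel $q_\mathrm{C}^{n_\ell}$. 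I would then re-run the proofs of \cref{lem:rst} and \cref{lem:tlda} verbatim with $q_\mathrm{C}$ replaced by $K_{\ell-1}$: since $K_{\ell-1}$ is in detailed balance with $\pi_{\ell-1}$ and commutes with itself, \cref{lem:commute} together with induction on the power gives that $K_{\ell-1}^n$ is in detailed balance with $\pi_{\ell-1}$ for every $n\in\mathbb{Z}^+$, and hence so is the mixture $\sum_{n\in\mathbb{Z}^+}p_\ell(n)K_{\ell-1}^n$ over the random subchain length. Lifting to the composite state $\theta_\ell=(\theta_{\ell,\mathrm{F}},\theta_{\ell,\mathrm{C}})$ exactly as in the proof of \cref{lem:tlda} -- the coarse-modes kernel acts only on $\theta_{\ell,\mathrm{C}}$, the fine-modes proposal $q_{\ell,\mathrm{F}}$ acts only on $\theta_{\ell,\mathrm{F}}$ and is thus trivially in detailed balance with $\pi_{\ell-1}$, and the two block kernels commute -- \cref{lem:commute} shows the effective level-$\ell$ proposal kernel for $\psi=(\psi_\mathrm{F},\psi_\mathrm{C})$ is in detailed balance with $\pi_{\ell-1}$. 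Finally, \cref{lem:cf2} identifies the acceptance probability \cref{eq:beta_mlda} as the Metropolis--Hastings rule for that proposal kernel against target $\pi_\ell$, so the composite level-$\ell$ step, i.e. $K_\ell$, is in detailed balance with $\pi_\ell$. Taking $\ell=L$ yields the claim for the call \cref{eq:mldacall}, completing the induction.

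The step I expect to require the most care is not a calculation but the bookkeeping that makes the recursion collapse onto \cref{lem:tlda}: one must (i) phrase the inductive hypothesis at the right level of generality, namely that \textbf{MLDA} at level $\ell-1$ realises one fixed $\pi_{\ell-1}$-reversible kernel for all run-lengths and starting points, so that ``$n_\ell$ compositions of it'' is meaningful and \cref{lem:commute} applies; and (ii) verify that the embedded-state argument of \cref{lem:tlda} used of the coarse proposal only that it is reversible with respect to the coarse target and commutes with itself -- properties shared by $K_{\ell-1}$ and $q_\mathrm{C}$ -- so that nothing new is incurred by the substitution. Once this is in place, \cref{thm:mlda} is in effect \cref{lem:tlda} wrapped in an induction, with \cref{lem:rst}, \cref{lem:commute} and \cref{lem:cf2} supplying all the mechanical parts.
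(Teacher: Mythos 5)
Your proof is correct and takes essentially the same route as the paper: induction on the level, with \cref{lem:tlda} giving the base case, \cref{lem:commute} propagating reversibility of the effective level-$\ell$ proposal (powers and mixtures of the level-$(\ell-1)$ kernel, composed with the commuting fine-mode block), and \cref{lem:cf2} handling the acceptance step. You merely spell out explicitly the propagation argument that the paper leaves implicit by referring back to the proof of \cref{lem:tlda}.
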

\begin{proof}
The proof follows essentially by induction on the level $\ell$ from the proof of \cref{lem:tlda}. At level $\ell=1$, \textbf{MLDA} is equivalent to \textbf{TLDA}, and so the base step follows immediately from \cref{lem:tlda}. Let us now assume that the proposal kernel for $\psi=(\psi_\mathrm{F},\psi_\mathrm{C})$ on level $\ell$ simulated using \textbf{MLDA} on level $\ell-1$ is in detailed balance with $\pi_{\ell-1}$. Then it follows from \cref{lem:cf2} that the acceptance probability in Alg.~5 Eq.~\cref{eq:beta_mlda} produces a Markov chain that is in detailed balance with $\pi_\ell(\cdot)$, which concludes the induction step.
\end{proof}

\subsubsection{Comparison with MLMCMC} \label{sec:mlmcmc_comparison}

The generalisation of Delayed Acceptance to an extended multilevel setting leads to clear similarities with the Multilevel Markov Chain Monte Carlo (MLMCMC) Method proposed by Dodwell {\em et al.}~\cite{dodwell_hierarchical_2015}. The more subtle difference between the two approaches is illustrated in \cref{fig:mlda_vs_mlmcmc}. 

The MLDA algorithm can be seen as a recursive application of the surrogate transition method over multiple levels. If a proposal $\psi$ from level $\ell-1$ for level $\ell$ at state $\theta_\ell^{j}$ is rejected, the initial state for the coarse subchain $\theta_{\ell-1}^{0}$ is set back to $\theta_\ell^{j}$. Hence, the new coarse subchain, which will generate the next proposal for level $\ell$, is initialised from the same state as the previous subchain. 
\begin{figure}[t]
\includegraphics[width = 0.56\linewidth]{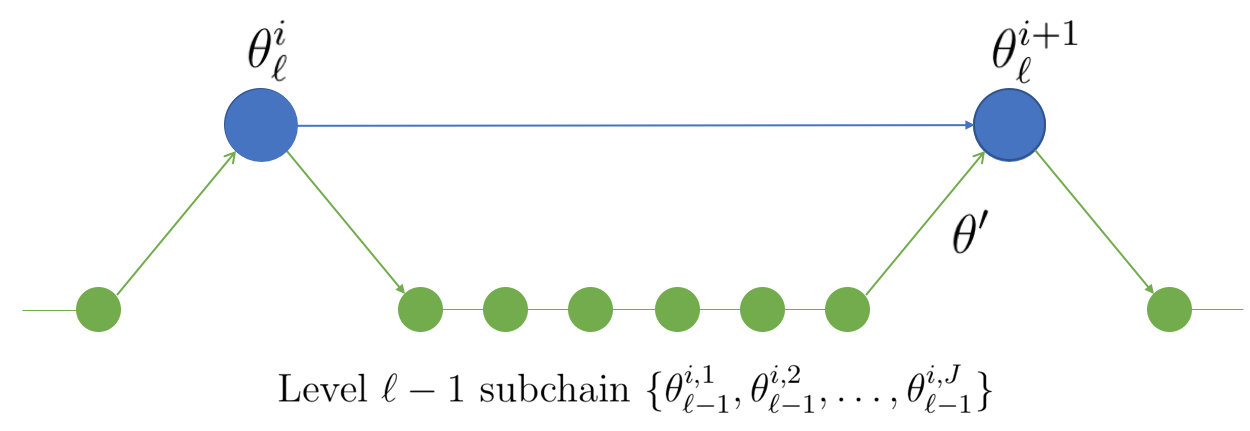}
\includegraphics[width = 0.4\linewidth]{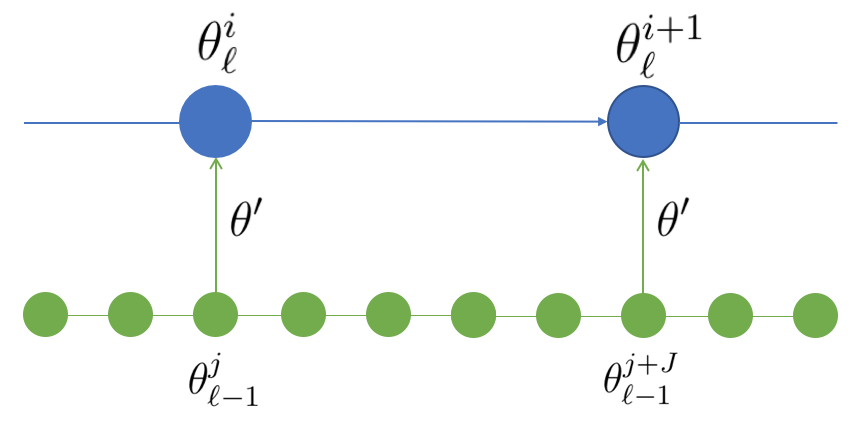}
\caption{Schematic for generating a proposal $\theta'$ on level $\ell$ for MLDA (left) and  MLMCMC (right) using a fixed length subchain of length $J$. The key difference is that for MLMCMC the coarse chain on level $\ell-1$ is generated independently of the chain on level $\ell$.}
\label{fig:mlda_vs_mlmcmc}
\end{figure}

For MLMCMC \cite{dodwell_hierarchical_2015}, even if the coarse proposal is rejected, the  coarse chain continues independently of the fine chain. In analogy to the subchain picture in MLDA, this corresponds to initialising the subchain on level $\ell-1$ 
with the coarse state $\psi_\mathrm{C}$ that has just been rejected on level $\ell$. As a result, coarse and fine chains will separate
and only re-coalesce once a coarse proposal is accepted at the fine level. This choice provides better mixing at coarse levels and allows for efficient parallelisation of the MLMCMC algorithm  \cite{10.1145/3458817.3476150}, but it does entail one important caveat; 
The practical algorithm in \cite[Alg.~3]{dodwell_hierarchical_2015}
does not necessarily define a Markov process unless coarse proposals passed to the next finer level are independent, as in \cite[Alg.~2]{dodwell_hierarchical_2015}. The practical implication of violating this requirement is that we do not have a proof of convergence of MLMCMC with finite subchains because we cannot apply the theorems that guarantee convergence for homogeneous Markov chains. Indeed, numerical experiments (not shown) indicate that estimates using MLMCMC with finite subchains are biased and that the underlying chains do not converge to the desired target distributions.

Accordingly, in theory the practical multilevel estimator proposed by Dodwell \textit{et al.} \cite[Alg.~3]{dodwell_hierarchical_2015} is only unbiased if the coarse proposal is an independent sample from $\pi_{\ell-1}$; therefore only at infinite computational cost (i.e. when the subchain length goes to infinity). However, if the fixed subchain length is chosen to be greater than twice the integrated autocorrelation length of the chain at that level, in practice this bias disappears. This imposes the constraint that the subchain length might have to be fairly long. If the acceptance rate is also relatively low, the method becomes computationally inefficient, i.e. a lot of computational effort has to be put into generating independent proposals from a coarse distribution which are then rejected with high probability.

\subsection{A Multilevel Estimator and Variance Reduction} \label{sec:VarianceReduction}

Using the MLDA sampler proposed above, it is in fact possible to define an asymptotically unbiased multilevel estimator that retains most of the computational benefits of both Multilevel Monte Carlo \cite{giles_multilevel_2008} and MLMCMC \cite{dodwell_hierarchical_2015}. Let $Q_\ell(\theta_\ell)$ define some quantity of interest computed on level $\ell = 0, \ldots,L$. The aim is to estimate $\mathbb E_{\pi_L}[Q_L]$ -- the expectation of $Q_L$ with respect to the posterior distribution $\pi_L$ on the finest level $L$ -- using as little computational effort as possible. 

The idea of Multilevel Monte Carlo is, at its heart, very simple. The key is to avoid estimating the expected value $\mathbb E_\ell[Q_\ell]$ directly on level $\ell$, but instead to estimate the correction with respect to the next lower level. Under the assumption that samples on level $\ell - 1$ are cheaper to compute than on level $\ell$ and that the variance of the correction term is smaller than the variance of $Q_\ell$ itself, the cost of computing this estimator is much lower than an estimator defined solely on samples from level $\ell$.
In the context of MLDA and MLMCMC, the target density $\pi_\ell$ depends on $\ell$, so that we write
\begin{equation}\label{eqn:ML_Estimator}
\mathbb E_{\pi_L}[Q_L] = \mathbb E_{\pi_0}[Q_0] + \sum_{\ell=1}^L\Big( \mathbb E_{\pi_\ell}[Q_\ell] - \mathbb E_{\pi_{\ell-1}}[Q_{\ell-1}]\Big), 
\end{equation}
which is achieved by adding and subtracting $\mathbb E_{\pi_\ell}[Q_\ell]$ for all levels $\ell = 0,\ldots, L-1$. Note that for the particular case where the densities $\{\pi_\ell\}_{\ell=0}^{L}$ are all equal, this reduces to the simple telescoping sum forming the basis of standard Multilevel Monte Carlo \cite{giles_multilevel_2008}. 

The practical MLMCMC algorithm in \cite[Alg.~3]{dodwell_hierarchical_2015} now proceeds by estimating 
the first term in \cref{eqn:ML_Estimator} using the MCMC estimator $E_{\pi_0}[Q_0] \approx \frac{1}{N_0}\sum_{i = 1}^{N_0} Q_0(\theta_0^{i})$ with a Markov chain $\big[\theta_0^{1},\ldots,\theta_0^{N_0}\big]$ produced with a standard \textbf{MH} on the coarsest level. Each of the correction terms for $\ell \geq 1$ is estimated by
\begin{equation}\label{eqn:ML_Telescoping}
\mathbb E_{\pi_\ell}[Q_\ell] - \mathbb E_{\pi_{\ell-1}}[Q_{\ell-1}] \approx \frac{1}{N_\ell}\sum_{i = 1}^{N_\ell} Q_\ell \big(\theta_\ell^{i}\big) - Q_{\ell - 1}\big(\theta_{\ell-1}^{J_{\ell}i}\big),
\end{equation}
where $N_\ell$ is the total number of samples on level $\ell$ after subtracting burn-in, $J_{\ell}$ is the subchain length on level $\ell-1$ and $\theta_{\ell-1}^{J_{\ell}i}$ is the state of the coarse chain used as the proposal for the $i$th state of the fine chain in the MLMCMC algorithm. 
As mentioned in \cref{sec:mlmcmc_comparison}, this multilevel estimator is only unbiased for MLMCMC as $J_\ell \to \infty$ or, in practice, for coarse subchains with $J_\ell$ greater than twice  
the integrated autocorrelation length. 

An unbiased multilevel estimator can be produced  using MLDA, without this constraint on the subchain lengths. However, since the levels of MLDA are strongly coupled and the coarse levels are consecutively realigned with the next-finer, this is non-trivial. We achieve it by employing a particular form of RST Alg.~3 in the MLDA Alg.~5. For all $\ell = 1,\ldots,L$, we set the probability mass function over the subchain length on level $\ell-1$ to the discrete uniform distribution $p_{\ell} = \mathcal U(\{1,2,\ldots, J_{\ell}\})$, where $J_{\ell}$ is the maximum subchain length. Hence, the $j$th proposal $\psi_\mathrm{C} = \psi_{\ell-1}^{j}$ for the coarse modes on level $\ell$ in this version of MLDA constitutes an independent, uniformly-at-random draw from a subchain of length $J_{\ell}$ on level $\ell-1$. Crucially, we let the coarse sampler continue sampling beyond the proposed state to produce subchains of fixed length $J_{\ell}$ for each state of the fine chain. Moreover, we also evaluate and store the quantity of interest at each state of each of those subchains on level $\ell-1$.

Thus, using MLDA in this way to compute a chain $[\theta_L^{1},\ldots,\theta_L^{N}]$ on the finest level $L$. In addition to the 
\[
N_L = N \ \ \text{samples} \quad Q_L\big(\theta_L^{1}\big), \ldots, Q_L\big(\theta_L^{N_L}\big) \quad \text{on level} \ L,
\]
we obtain also
\[
N_\ell = N \times \prod_{k=\ell}^{L-1} J_{k+1} \ \ \text{samples} \quad Q_\ell\big(\theta_\ell^{1}\big), \ldots, Q_\ell\big(\theta_\ell^{N_\ell}\big) \quad \text{on levels} \ \ \ell = 0,\ldots, L-1.
\]
Using those samples the following asymptotically unbiased MLDA estimator of the posterior expectation  $\mathbb E_{\pi_L}[Q_L]$ can be defined:
\begin{equation}
\label{eq:mlda}
\widehat{Q}_L := \frac{1}{N_0}\sum_{i = 1}^{N_0} Q_0\big(\theta_0^{i}\big) + \sum_{\ell=1}^L \frac{1}{N_\ell}\sum_{j = 1}^{N_\ell} Q_\ell\big(\theta_\ell^{j}\big) - Q_{\ell - 1}\big(\psi_{\ell-1}^{j}\big).
\end{equation}
Here, $\psi_{\ell-1}^{j}$ denotes the proposal $\psi_\mathrm{C}$ for the coarse modes of the $j$th state $\theta_\ell^{j}$ of the Markov chain on level $\ell$ produced by MLDA in Alg.~5.

Let us first discuss, why this estimator is asymptotically unbiased. For each $j$, the proposals $\psi_{l-1}^{j}$ are independently and uniformly drawn from the subchain $[\theta_{l-1}^{k} : (j-1)J_\ell < k \le jJ_\ell]$. Thus, the ensemble $\big\{\psi_{l-1}^{1},\ldots,\psi_{l-1}^{N_\ell}\big\}$ is a random draw from $\big\{\theta_{l-1}^{1},\ldots,\theta_{l-1}^{N_{\ell-1}}\big\}$ and thus identically distributed. As a consequence, in the limit as $N_\ell \to \infty$ for all $\ell$, most terms on the right hand side of \cref{eq:mlda} cancel. What remains, is $\sum_{j = 1}^{N_L} Q_L\big(\theta_L^{j}\big)$, which due to \cref{thm:mlda} is an unbiased estimator for $\mathbb E_{\pi_L}[Q_L]$ in the limit as $N_L \to \infty$.

Since the coarse subsamplers in MLDA are repeatedly realigned with the next finer distribution by way of the MLDA transition kernel, the samples on the coarse levels are in fact not distributed according to the ``vanilla'' densities $\{\pi_\ell\}_{\ell=0}^{L-1}$, but come from some ``hybrid'' mixture distributions.With the particular choice for $p_\ell$, the density of the mixture distribution arising from subsampling the coarse density on level $\ell-1 < L$ can be written
\begin{equation}
\tilde{\pi}_{\ell-1} = \frac{1}{J_\ell} \sum_{n=1}^{J_\ell} K_{\ell-1}^{n} \: \tilde{\pi}_{\ell,C}
\end{equation}
where $\tilde{\pi}_{\ell,C}$ is the marginal density of the coarse modes of the next finer density, $K_{\ell-1}$ is the transition kernel simulated by each step of subsampling on level $\ell-1$, and $K^n_{\ell-1}$ is that kernel composed with itself $n$ times. Recall again that according to \cref{thm:mlda} the finest sampler targets the exact posterior, so that $\tilde{\pi}_L = \pi_L$. Thus, the MLDA estimator in \cref{eq:mlda} approximates the following telescoping sum:
\begin{equation}
\mathbb E_{\pi_L}[Q_L] = \mathbb E_{\tilde{\pi}_0}[Q_0] + \sum_{\ell=1}^L\Big( \mathbb E_{\tilde{\pi}_\ell}[Q_\ell] - \mathbb E_{\tilde{\pi}_{\ell-1}}[Q_{\ell-1}]\Big),
\end{equation}
which is a small but crucial difference to the sum in Eq.~\cref{eqn:ML_Estimator} that forms the basis of MLMCMC \cite{dodwell_hierarchical_2015}.

The computational gains due to multilevel variance reduction remain. In fact, since the mixture densities $\tilde{\pi}_{\ell-1}$ are conditioned every $J_\ell$ steps on the next finer chain, they are even closer and thus, the variances of the correction terms in \cref{eq:mlda} will be further reduced compared to the variances of the estimates in \cref{eqn:ML_Telescoping}. The fixed subchain lengths $J_\ell$ and thus the numbers of samples $N_\ell$ on the coarser levels can then be chosen as usual in multilevel Monte Carlo approaches to minimise the total variance for a fixed computational budget, or to minimise the cost to achieve the smallest variance. We are not going to go into more depth with respect to this estimator in this paper, but refer to e.g. \cite{cliffe_multilevel_2011, dodwell_hierarchical_2015, giles_multilevel_2018} for detailed analyses of Multilevel (Markov Chain) Monte Carlo estimators.

\subsection{Adaptive Correction of the Approximate Posteriors to Improve Efficiency} \label{sec:adapt}

While the algorithm outlined in \cref{sec:mlda_convergence} does guarantee sampling from the exact posterior, there are situations where convergence can be prohibitively slow. When the coarse model approximations are poor, the second-stage acceptance probability can be low, and many proposals will be rejected. This will result in suboptimal acceptance rates, poor mixing and low effective sample sizes. The leftmost panel in \cref{fig:inflation_adaption} shows a contrived example where the approximate likelihood function (red isolines) is offset from the exact likelihood function (blue contours) and its scale, shape and orientation are incorrect.

One way to alleviate this problem is through {\em tempering}, where the variance in the likelihood function ${\mathbf \Sigma}_\epsilon$ on levels $\ell < L$ is inflated, resulting in a wider approximate posterior distribution. While this approach would allow the approximate posterior to encapsulate the exact posterior, it does not tackle the challenge in an intelligent fashion, and the inflation factor introduces an additional tuning parameter.

\begin{figure}[t]
    \centering
    \includegraphics[width=1\linewidth]{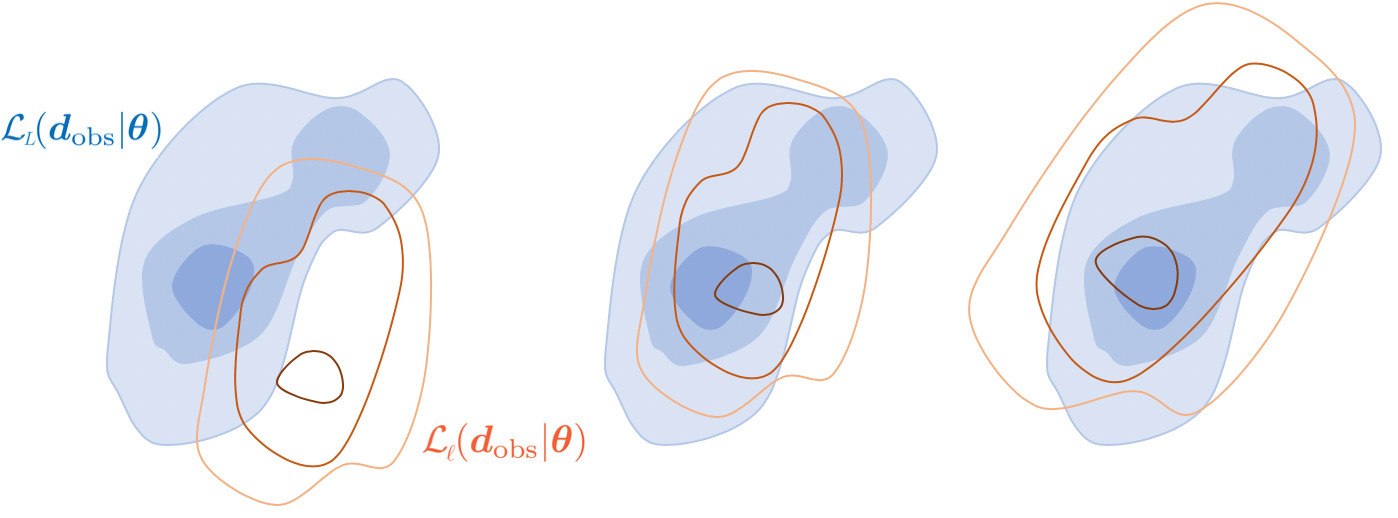}
    \caption{Effect of applying the Gaussian Adaptive Error Model (AEM). The first panel shows the initial state before adaptation, where the coarse likelihood function ($\mathcal{L}_{\ell}({\bf d}_{obs}|\theta)$, red isolines) approximates the fine likelihood function ($\mathcal{L}_{L}({\bf d}_{obs}|\theta)$, blue contours) poorly. The second panel shows the effect of adding the mean of the bias to the likelihood functional, resulting in an offset of the coarse model likelihood function. The third panel shows the effect of also adding the covariance of the bias to the likelihood functional, resulting in a scaling and rotation of the coarse likelihood function. Adapted from \cite{lykkegaard_accelerating_2021}.}
    \label{fig:inflation_adaption}
\end{figure}

In place of tempering, an enhanced Adaptive Error Model (AEM) can be employed to account for discrepancies between model levels. Let $\mathcal{F}_{\ell}$ denote the coarse forward map on level $\ell$ and $\mathcal{F}_{L}$ denote the forward map on the finest level~$L$. To obtain a better approximation of the data $d$ using $\mathcal{F}_{\ell}$, the two-level AEM suggested in \cite{cui_bayesian_2011} and analysed in \cite{cui_posteriori_2019,fox2020randomized} is extended here by adding a telescopic sum of the differences in the forward model output across all levels from $\ell$ to $L$:
\begin{equation} \label{eq:inverse_problem_bias}
d = \mathcal F_L(\theta) + \epsilon = \mathcal F_{\ell}(\theta) + \mathcal B_{\ell}(\theta) + \epsilon \quad \mbox{with} \quad
\mathcal B_{\ell}(\theta) := \sum_{k=\ell}^{L-1} \underbrace{\mathcal F_{k+1}(\theta) -  \mathcal F_{k}(\theta)}_{:=B_k(\theta)}
\end{equation}
denoting the bias on level $\ell$ at ${\mathbf \theta}$. The trick in the context of MLDA is that, since $\mathcal{B}_{\ell}$ is just a simple sum, the individual bias terms $B_k$ from pairs of adjacent model levels can be estimated independently, so that new information can be exploited each time \textit{any} set of adjacent levels are evaluated for the same parameter value $\theta$.

Approximating each individual bias term $B_k = \mathcal F_{k+1} -  \mathcal F_{k}$ with a multivariate Gaussian $B^*_k \sim \mathcal N(\mu_k, \mathbf \Sigma_k)$, the total bias $\mathcal{B}_\ell$ can be approximated by the Gaussian $\mathcal B^*_{\ell} \sim \mathcal N(\mu_{\mathcal{B}, \ell}, \mathbf \Sigma_{\mathcal{B}, \ell})$ with $\mu_{\mathcal{B}, \ell} = 
\sum_{k=\ell}^{L-1} \mu_k$ and $\mathbf \Sigma_{\mathcal{B}, \ell} = \sum_{k=\ell}^{L-1} \mathbf \Sigma_k$.

The bias-corrected likelihood function for level $\ell$ is then proportional to
\begin{equation} \label{eq:likelihood_adaptive} 
\mathcal{L}_{\ell}({\mathbf d} | \theta) = \exp \left( -\frac{1}{2} (\mathcal{F}_{\ell}(\theta) + {\mathbf \mu}_{\mathcal{B}, \ell} - {\mathbf d})^T ({\mathbf \Sigma}_{\mathcal{B}, \ell} + {\mathbf \Sigma}_e)^{-1} (\mathcal{F}_{\ell}(\theta) + {\mathbf \mu}_{\mathcal{B}, \ell} - {\mathbf d}) \right).
\end{equation}
The \emph{Approximation} Error Model, suggested by \cite{kaipio_statistical_2007}, is constructed offline, by sampling from the prior distribution before running the MCMC; We simply sample $N$ parameter sets from the prior and compute the sample moments according to
\begin{equation} 
    {\bf \mu}_k    = \frac{1}{N} \sum_{i=1}^{N} B_k(\theta^{(i)}) \quad \mbox{and} \quad    {\bf \Sigma}_{k} = \frac{1}{N-1} \sum_{i=1}^{N} (B_k(\theta^{(i)}) - {\bf \mu}_k)(B_k(\theta^{(i)}) - {\bf \mu}_k)^{T}.
\end{equation}
However, this approach requires significant investment prior to sampling, and may result in a suboptimal error model, since the bias in the posterior distribution  is very different from the bias in the prior when the data is informative. Instead, as suggested in \cite{cui_bayesian_2011}, an estimate for $B_k$ can be constructed iteratively during sampling, using the following recursive formulae 
for sample means and sample covariances~\cite{haario_adaptive_2001}:
\begin{equation}
   {\mu}_{k,i+1}    = \frac{1}{i+1} \Big( i {\mu}_{k,i} + B_k(\theta^{i+1}) \Big) \quad \mbox{and}
\end{equation}
\begin{equation}
    {\bf \Sigma}_{k,i+1} = \frac{i-1}{i} {\bf \Sigma}_{k,i} + \frac{1}{i} \Big( i{\mu}_{k,i}\: { \mu}_{k,i}^T - (i+1) {\mu}_{k,i+1}\: { \mu}_{k,i+1}^T + B_k(\theta^{i+1}) \: B_k(\theta^{i+1})^T \Big).
\end{equation}
While this approach in theory results in a MCMC algorithm that is not Markov, the recursively constructed sample moments converge as sampling proceeds and hence the approach exhibits \textit{diminishing adaptation} and \emph{bounded convergence} which is sufficient to ensure ergodicity for adaptive MCMC schemes, \cite{roberts_coupling_2007, roberts_examples_2009}. As shown in \cite{cui_posteriori_2019}, it is also possible to construct a \textit{state-dependent} AEM, where the coarse samples are corrected only according to the bias of the state of the MCMC, rather than the mean of the bias. This approach, however, may require a different form of the multilevel acceptance probability (Eq.~\cref{eq:beta_mlda}), which we have not yet established, as discussed in \cref{sec:mlda_convergence}. We remark that while the simple Gaussian error model described here does suffer from a limited expressiveness, it is robust. Any coarse-level bias that is nonlinear in the model parameters will be absorbed by the respective covariance term, which will allow the coarse levels to sample ``broader'' and certainly encapsulate the true posterior. The general bias-modelling framework described by Eq.~\cref{eq:inverse_problem_bias} allows for the bias terms to be modelled by any functions of the model parameters, including Gaussian processes, artificial neural networks, polynomial chaos expansions, etc., as long as they are either constructed \textit{a priori} or exhibit diminishing adaptation and bounded convergence. However, the Gaussian model proposed here requires does not require any tuning or caching of the bias history, and is both computationally cheap and numerically robust. Hence, unless a particular problem strongly favours a different bias modelling approach, we recommend the Gaussian model described above.

\section{Examples}
\label{sec:examp}

In this section, we consider three inverse problems which demonstrate the efficiency gains obtained by using MLDA, as well as by the extensions outlined above. The algorithm has been included in the free and open source probabilistic programming library \texttt{PyMC3}\footnote{\href{https://docs.pymc.io/en/v3/index.html}{https://docs.pymc.io/en/v3/index.html}} as the \texttt{MLDA} step method since version 3.10.0, and the examples below were all completed using this implementation.

\subsection{Gravitational Survey}\label{sec:gravity}

In this example, we consider a 2-dimensional gravity surveying problem, adapted from the 1-dimensional problem presented in \cite{Hans10}. Our aim is to recover an unknown two-dimensional mass density distribution $f({\bf t})$ at a known depth $d$ below the surface from measurements $g({\bf s})$ of the vertical component of the gravitational field at the surface. The contribution to $g({\bf s})$ from infinitesimally small areas of the subsurface mass distribution are given by:
\begin{equation}
    \text{d}g({\bf s}) = \frac{\sin \theta}{r^2} f({\bf t}) \: \text{d}{\bf t}
\end{equation}
where $\theta$ is the angle between the vertical plane and a straight line between two points ${\bf t}$ and ${\bf s}$, and $r = \Vert {\bf s} - {\bf t} \Vert_2$ is the Eucledian distance between the points. We exploit that $\sin \theta = d/r$, so that
\begin{equation}
    \frac{\sin \theta}{r^2} f({\bf t}) \: \text{d}{\bf t} = \frac{d}{r^3} f({\bf t}) \: \text{d}{\bf t} = \frac{d}{ \Vert {\bf s} - {\bf t} \Vert_2^3} f({\bf t}) \: \text{d}{\bf t}
\end{equation}
This yields the integral equation
\begin{equation}
    g({\bf s}) = \int \! \int_T \frac{d}{ \Vert {\bf s} - {\bf t}\Vert_2^3} f({\bf t}) \: \text{d}{\bf t}
\end{equation}
where $T = [0,1]^2$ is the domain of the function $f({\bf t})$. This constitutes our forward model. We solve the integral numerically using midpoint quadrature. For simplicity, we use $m$ quadrature points along each dimension, so that in discrete form our forward model becomes
\begin{equation}
    g({\bf s}_i) 
    = \sum_{l=1}^{m} \omega_l \sum_{k=1}^{m} \omega_k \frac{d}{ \Vert {\bf s}_i - {\bf t}_{k,l} \Vert_2^3} \hat{f}({\bf t}_{k,l}) 
    = \sum_{j=1}^{m^2} \omega_j \frac{d}{ \Vert {\bf s}_i - {\bf t}_j \Vert_2^3} \hat{f}({\bf t}_j)
\end{equation}
where $\omega_j = 1/m^2$ are the quadrature weights, $\hat{f}({\bf t}_j)$ is the approximate subsurface mass at the quadrature points ${\bf t}_j, \ j = 1, \dots, m^2$, and  $g({\bf s}_i)$ is the surface measurement at the collocation point ${\bf s}_i, \ i = 1, \dots, n^2$. Hence, when $n > m$, we are dealing with an overdetermined problem and vice versa. This can be expressed as a linear system $\mathbf{Ax = b}$, where
\begin{equation}
    a_{ij} = \omega_j \frac{d}{ \Vert {\bf s}_i - {\bf t}_j \Vert_2^3}, \quad x_j = \hat{f}({\bf t}_j), \quad b_i = g({\bf s}_i).
\end{equation}
Due to the ill-posedness of the underlying, continuous inverse problem, the matrix $\mathbf{A}$ is very ill-conditioned,  which entails numerical instability and spurious, often oscillatory, naive solutions for noisy right hand sides. A problem of this type is traditionally solved by way of \textit{regularisation} such as Tikhonov regularisation or Truncated Singular Value Decomposition (TSVD), but it can also be handled in a more natural and elegant fashion as a Bayesian inverse problem.

For the exerimental set-up, a ``true'' mass density distribution $f(t)$ was assigned on $T$ at a depth of $d = 0.1$ (\cref{fig:gravity_true}, left panel). The modelled signal was then discretised with $m = n = 100$ and
perturbed with white noise with standard deviation $\sigma_{\epsilon} = 0.1$ (\cref{fig:gravity_true}, right panel) to be used as synthetic data in the numerical experiment.
\begin{figure}[t]
\centering
\includegraphics[width = 0.43\linewidth]{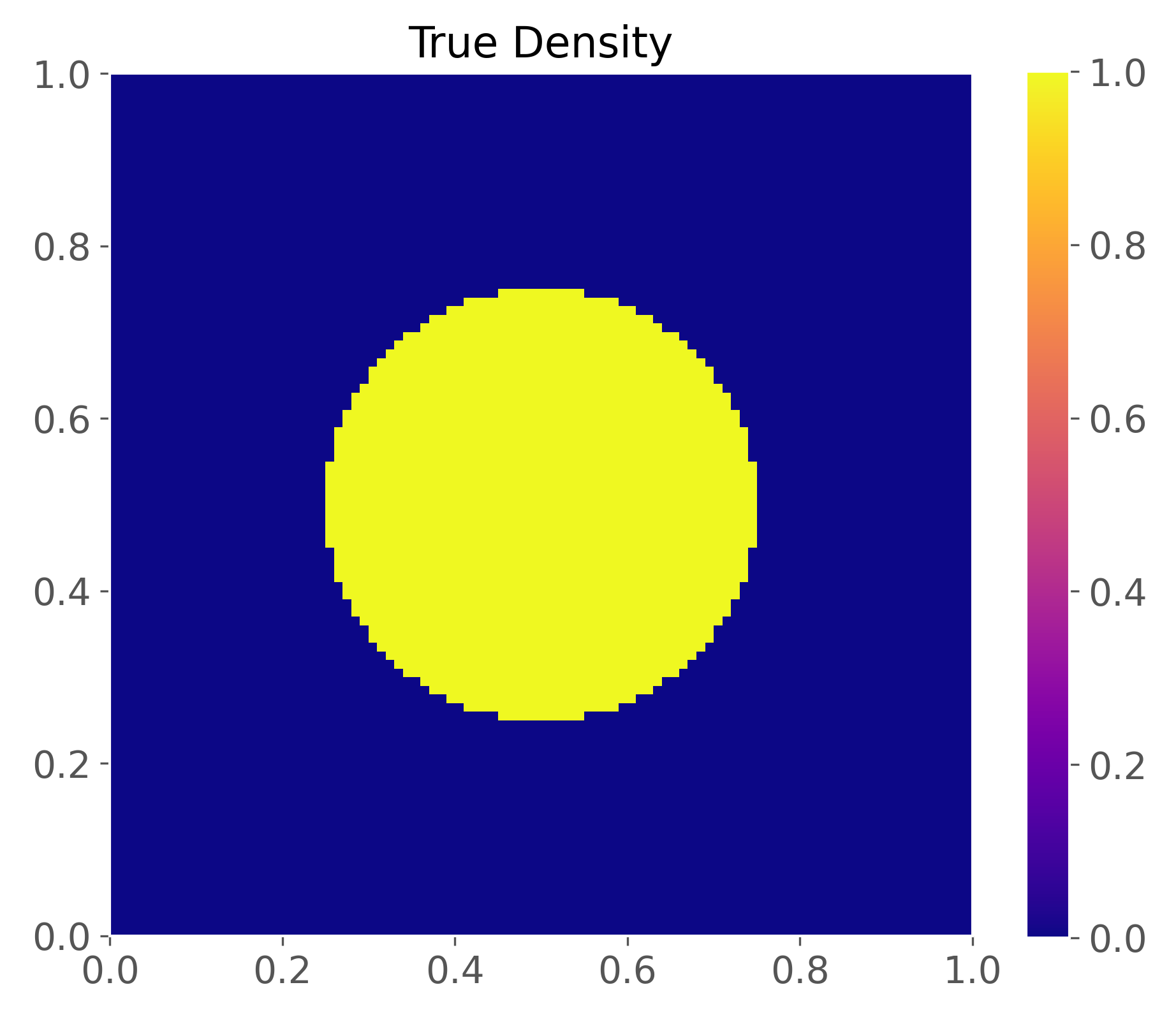}
\hspace{1cm}
\includegraphics[width = 0.43\linewidth]{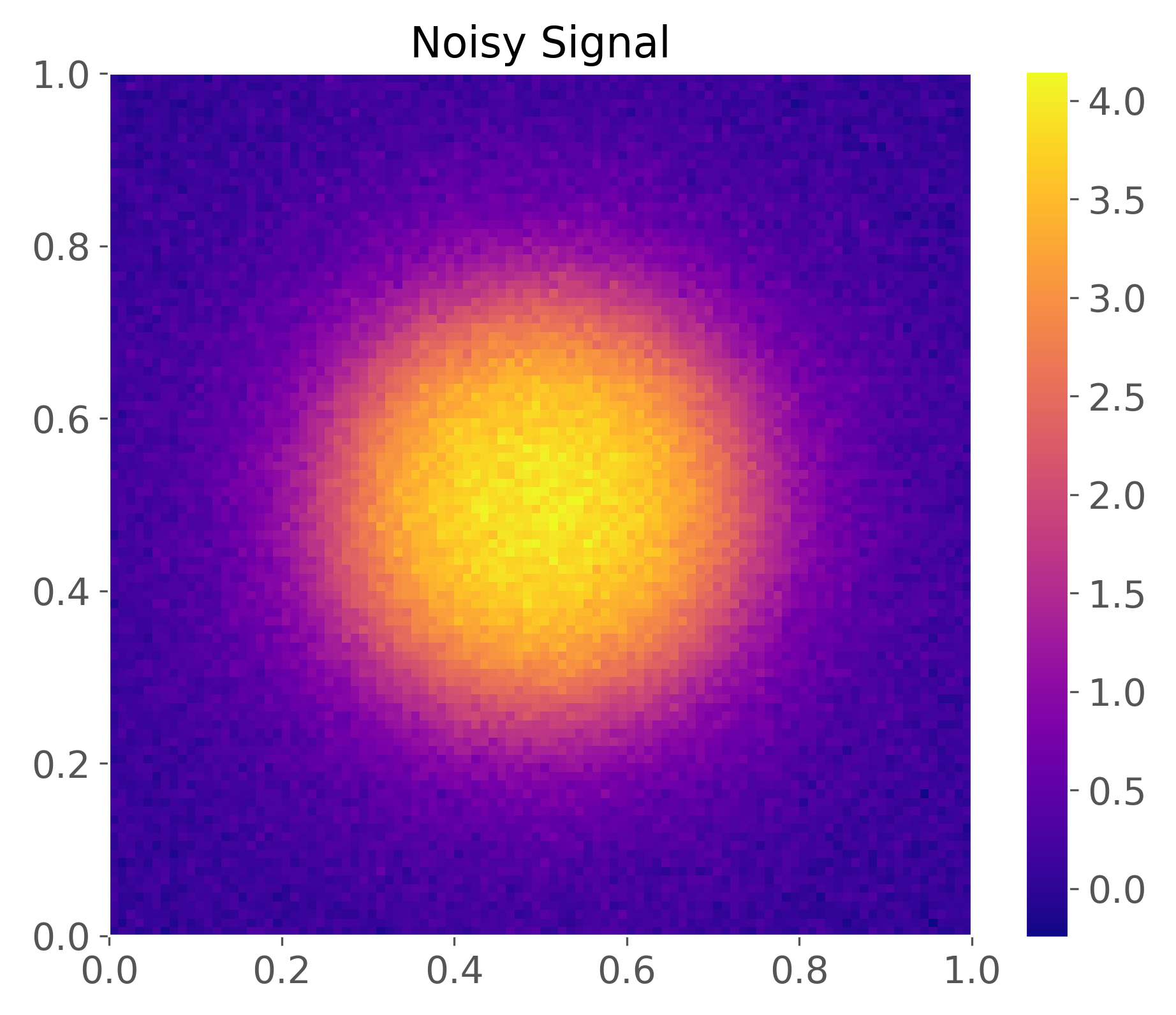}
\caption{(Left) The ``true'' mass density $f(t)$ and (right) the noisy signal at $d=0.1$, with $\sigma_{\epsilon} = 0.1$.}
 \label{fig:gravity_true}
\end{figure}

The unknown mass density distribution was modelled as a Gaussian Random Process with a Matérn 3/2 covariance kernel \cite{Ras06}:
\begin{equation}
    C_{3/2}({x}, {y}) = \sigma^2 \left( 1 + \frac{\sqrt{3} \Vert {x}-{y} \Vert_2 }{\lambda} \right) \exp \left( - \frac{\sqrt{3} \Vert {x}-{y} \Vert_2 }{\lambda} \right), \quad \mbox{for} \quad {x}, {y} \in D,
\end{equation}
where $\lambda$ is the covariance length scale and $\sigma^2$ is the variance.
The random 
field was parametrised using a truncated Karhunen-Lo\`eve (KL) expansion 
of $f(t)$, i.e. an expansion in terms of a finite set of independent, standard Gaussian random variables $\theta_i \sim \mathcal{N}(0,1)$, $i=1,\ldots,R$, given by
\begin{equation}
f(t,\omega) = \sum_{i=1}^{R} \sqrt{\mu_i} \phi_i({t})\theta_i(\omega).
\end{equation}
Here, $\{\mu_i\}_{i \in \mathbb N}$ are the sequence of strictly decreasing real, positive eigenvalues, and $\{\phi_i\}_{i\in \mathbb N}$ the corresponding $L^2$-orthonormal eigenfunctions of the covariance operator with kernel $C_{3/2}(x,y)$.

A model hierarchy consisting of two model levels, with $m = 100$ and $m = 20$ respectively, was created. A Matern 3/2 random process with $l = 0.2$ and $\sigma^2 = 1$ was initialised on the fine model level and  parametrised using KL decomposition, which was then truncated to encompass its $R=32$ highest energy eigenmodes. It was then projected to the coarse model space (\cref{fig:gravity_model}).

\begin{figure}[t]
\centering
\includegraphics[width = 0.43\linewidth]{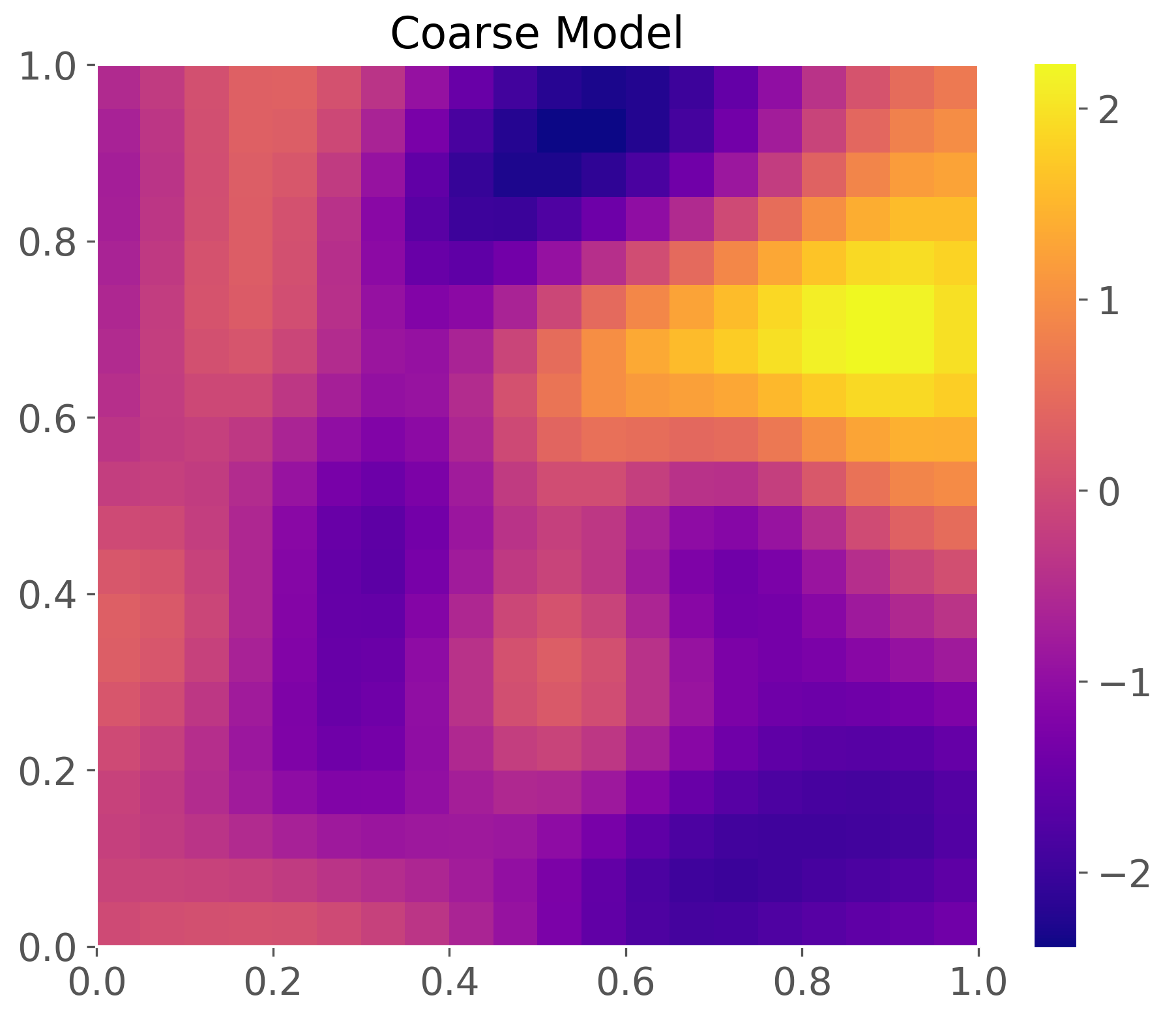}
\hspace{1cm}
\includegraphics[width = 0.43\linewidth]{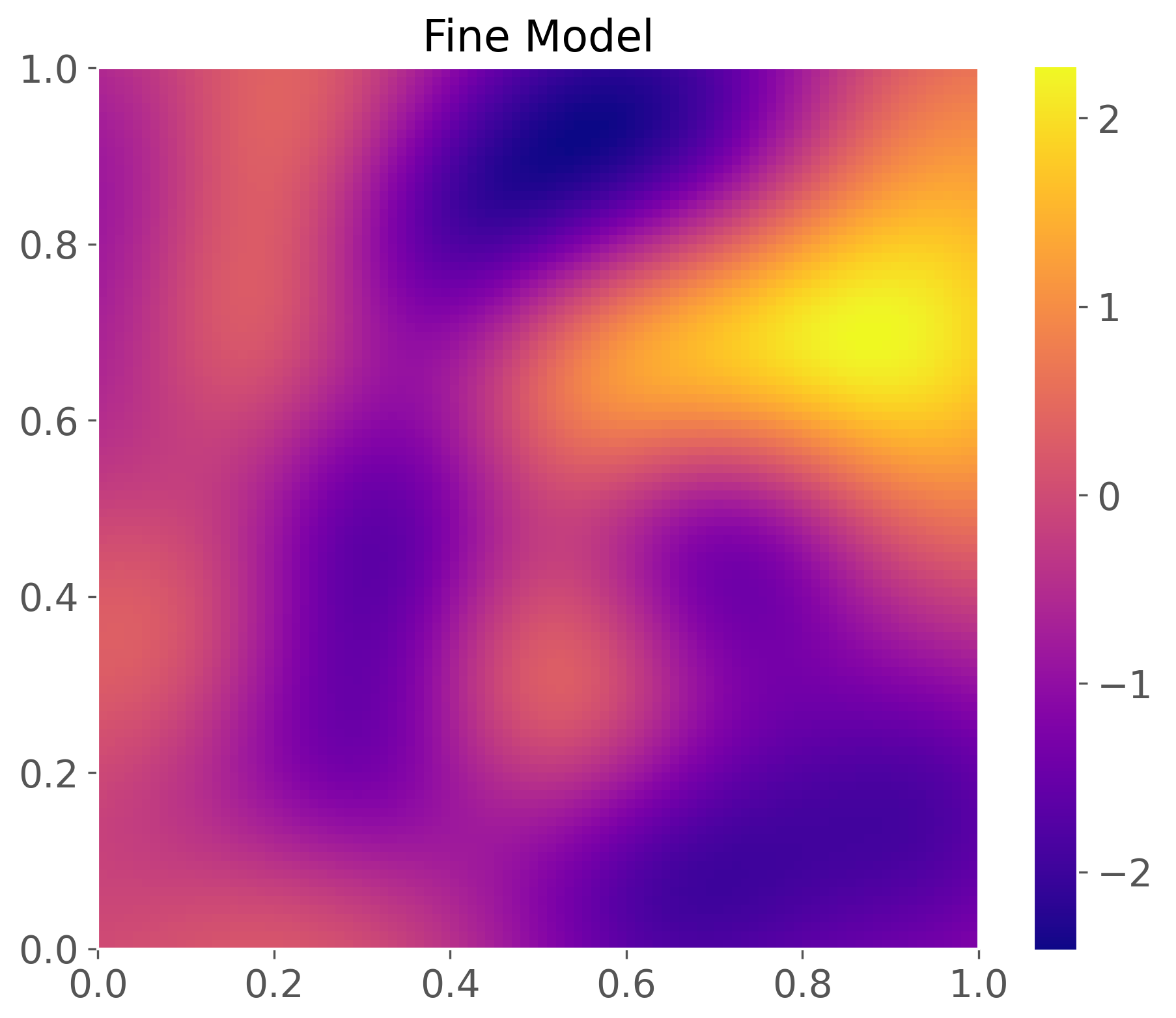}
\caption{Random realisations of the Matérn 3/2 random process prior, used to model the unknown mass density for the coarse model with $m = 20$ (left) and the fine model with $m = 100$ (right).}
\label{fig:gravity_model}
\end{figure}

Thus, the prior distribution of the model parameters $(\theta_i)_{i=1}^{R}$ is $\mathcal{N}(0, I_R)$. To sample from the posterior distribution of these parameters and thus to estimate the posterior mean conditioned on the synthetic data, we used the TLDA sampler with a Random Walk Metropolis Hastings (RWMH) sampler on the coarse level. We ran 2 independent chains, each with 20000 draws, a burn-in of 5000 and a subchain length on the coarse level of 10. We also ran 2 chains using a single level RWMH sampler on the fine level with otherwise identical settings, but with no subchains. Each chain was initialised at the MAP (Maximum a Posteriori) point.

While RWMH converged to the same parameter estimates as MLDA, RWMH exhibited inferior mixing (\cref{fig:gravity_traces}) and fewer effective samples per second (\cref{fig:gravity_performance}), particularly for the higher KL coefficients.
\begin{figure}[t]
\includegraphics[width = 1\linewidth]{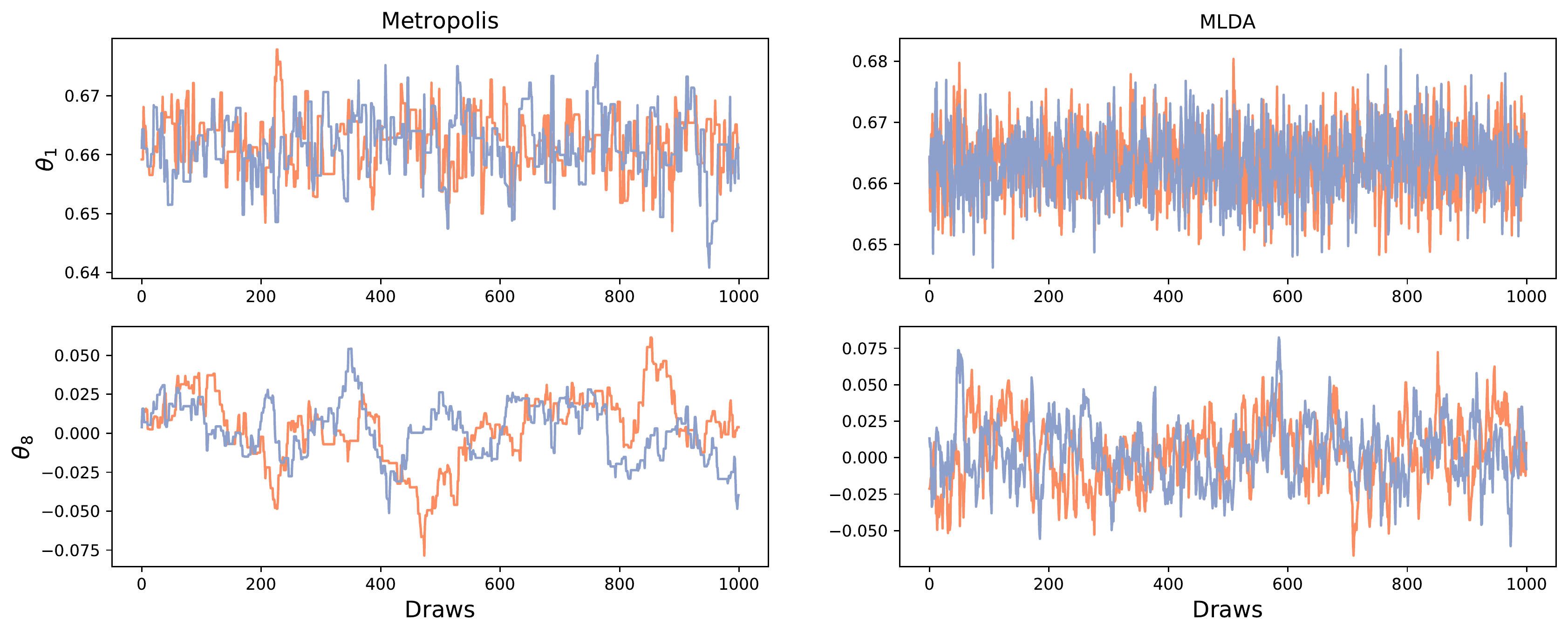}
\caption{Traces of $\theta_1$ (top row) and $\theta_8$, for RWMH (left column) and MLDA (right column), respectively. Different colors represent the independent chains.}
\label{fig:gravity_traces}
\end{figure}
\begin{figure}[t]
\centering
\includegraphics[width = 0.7\linewidth]{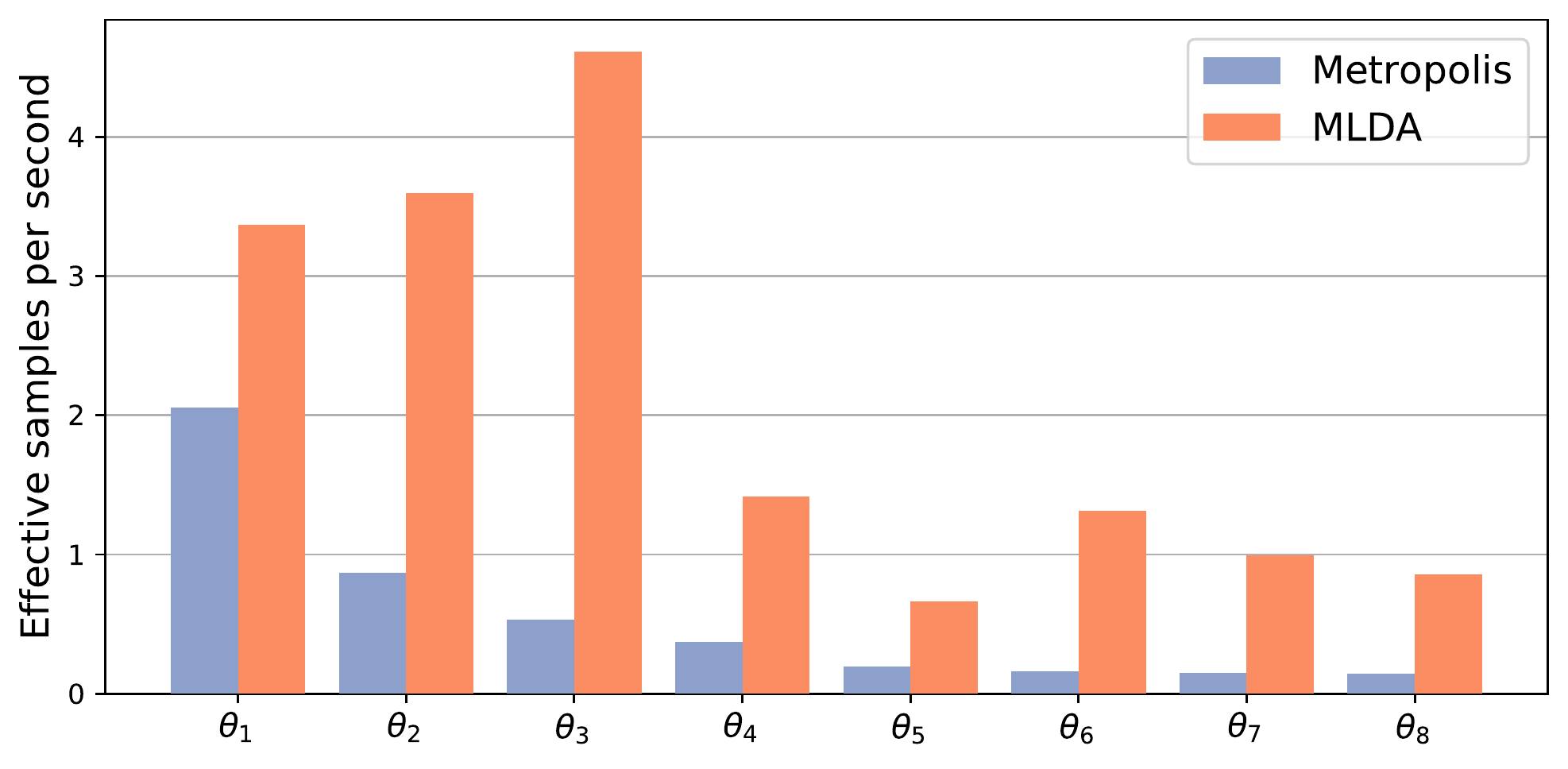}
\caption{Algorithmic performance measured in ES/s (effective samples per second), for the eight highest energy KL coefficients $\theta_k, k = 1, \dots, 8$, for both RWMH (blue) and MLDA (red).}
\label{fig:gravity_performance}
\end{figure}

\subsection{Predator-Prey Model}

The Lotka-Volterra model describes the interaction between populations of prey ($N$) and predators ($P$) over time \cite{rockwood_introduction_2015}. Their interaction is described by the system of nonlinear, first order, ordinary differental equations (ODEs) 
\begin{equation}\label{eqn:pp_model}
\frac{dN}{dt} = aN - bNP \quad \mbox{and} \quad \frac{dP}{dt} = cNP - dP, \quad \mbox{for} \: t > 0.
\end{equation}
The model outputs are fully described by the parameters
$$\theta = \{N_0, P_0, a, b, c, d\},$$
which include the initial densities of prey and predators at time $t=0$, and ecological parameters $a, b, c, d$, where broadly $a$ is the birth rate of the prey, $b$ is the encounter rate between prey and predators, $c$ is the growth rate for the predators and $d$ is the death rate of the predators. For further details on their physical interpretation see for example \cite{bacaer_short_2011}.
\begin{figure}[t]
    \centering
    \includegraphics[width=1.0\linewidth]{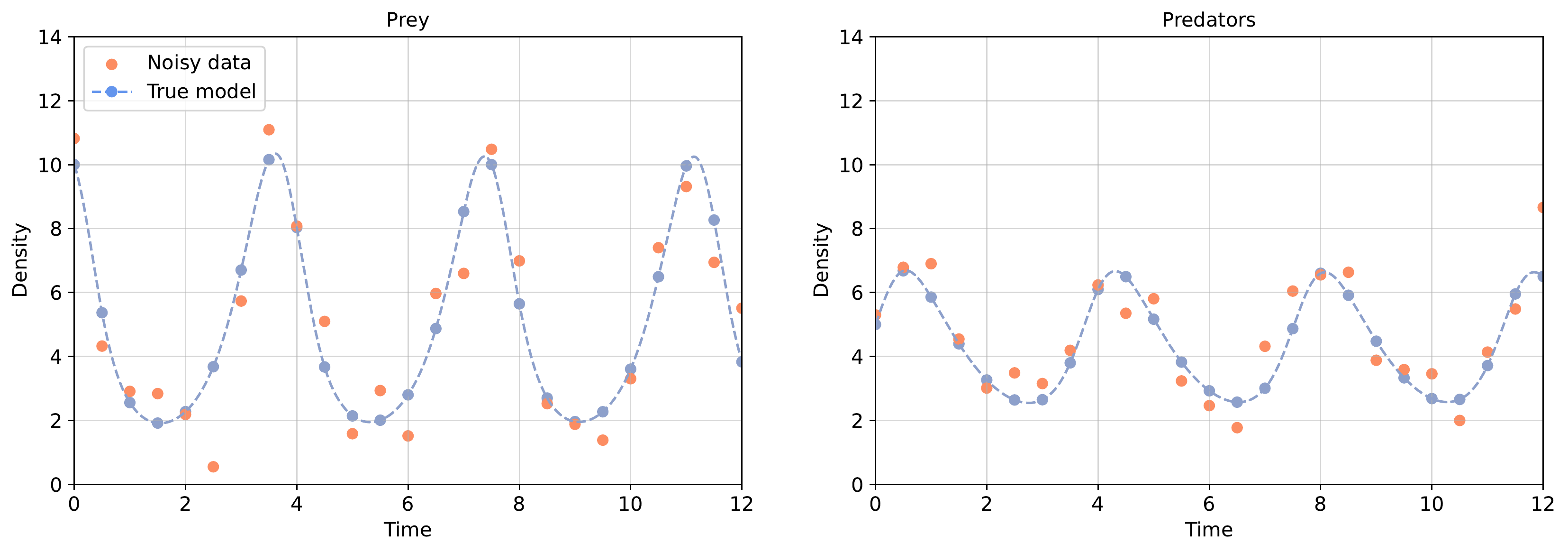}
    \caption{\label{fig:pp_data}The true (blue) and measured (red) densities of prey (left) and predators (right).}
\end{figure}

In this example, we wish to infer the distribution of $\theta$, given noisy observations of prey and predator densities at discrete time intervals, i.e. $N(t^\star)$ and $P(t^\star)$ for $t^\star \in \mathcal T$, where $\mathcal T = [0,12]$ is the domain. The observations are again synthetically generated by solving Eq.~\cref{eqn:pp_model} with the ``true'' parameters
$$
\theta^\star = \{10.0, 5.0, 3.0, 0.7, 0.2, 1.0\}
$$
and perturbing the calculated values $N(t^\star)$ and $P(t^\star)$  with independent Gaussian noise $\epsilon \sim \mathcal N(0, 1)$ (\cref{fig:pp_data}). Our aim is to predict the mean density of predators $\mathbb E(P)$ over the same period.

The solutions of the ODE system in Eq.~\cref{eqn:pp_model} can be approximated by a suitable numerical integration scheme. We use an explicit, adaptive Runge-Kutta method of order 5(4) \cite{strogatz_nonlinear_2007}. For the finest level $\ell=2$, we integrate over the entire time domain $\mathcal T_2 = [0,12]$ and use the entire dataset to compute the likelihood function, while for the coarse levels, we stop integration early, so that $\mathcal T_{1} = [0,8]$ and $\mathcal T_{0} = [0,4]$, and use only the corresponding subsets of the data to compute the likelihood functions.

We assume that we possess some prior knowledge about the parameters, and use informed priors $N_0 \sim \mathcal N(10.8, 1)$, $P_0 \sim \mathcal N(5.3, 1)$, $a \sim \mathcal N(2.5, 0.5)$, $b \sim \text{Inv-Gamma}(1.0, 0.5)$, $c \sim \text{Inv-Gamma}(1.0, 0.5)$ and $d \sim \mathcal N(1.2, 0.3)$.

To demonstrate the multilevel variance reduction feature, we ran the MLDA sampler with randomisation of the subchain length as described in \cref{sec:VarianceReduction} and then compared the (multilevel) MLDA estimator in Eq.~\cref{eq:mlda}, which uses both the coarse and fine samples, with a standard MCMC estimator based only on the samples produced by MLDA on the fine level. In both cases, we used the three--level model hierarchy as described above and employed the Differential Evolution Markov Chain (DE-MC\textsubscript{Z}) proposal \cite{ter_braak_differential_2008} on the coarsest level. The coarsest level proposal kernel was automatically tuned during burn-in to achieve an acceptance rate between 0.2 and 0.5. The subchain lengths of $J_2 = J_{1} = 10$ were chosen to balance the variances of the two contributions to the multilevel estimator (Eq.~\cref{eq:mlda}), as for MLMC and MLMCMC.

\cref{fig:pp_se} shows the development of the total sampling error as the sampling progresses, for the sampler with and without variance reduction. Employing variance reduction clearly leads to a lower sampling error than the standard approach. \cref{fig:pp_samples} shows the true prey and predator densities along with samples from the posterior distribution, demonstrating that the true model is encapsulated by the posterior samples, as desired.
\begin{figure}[t]
    \centering
    \includegraphics[width=0.6\linewidth]{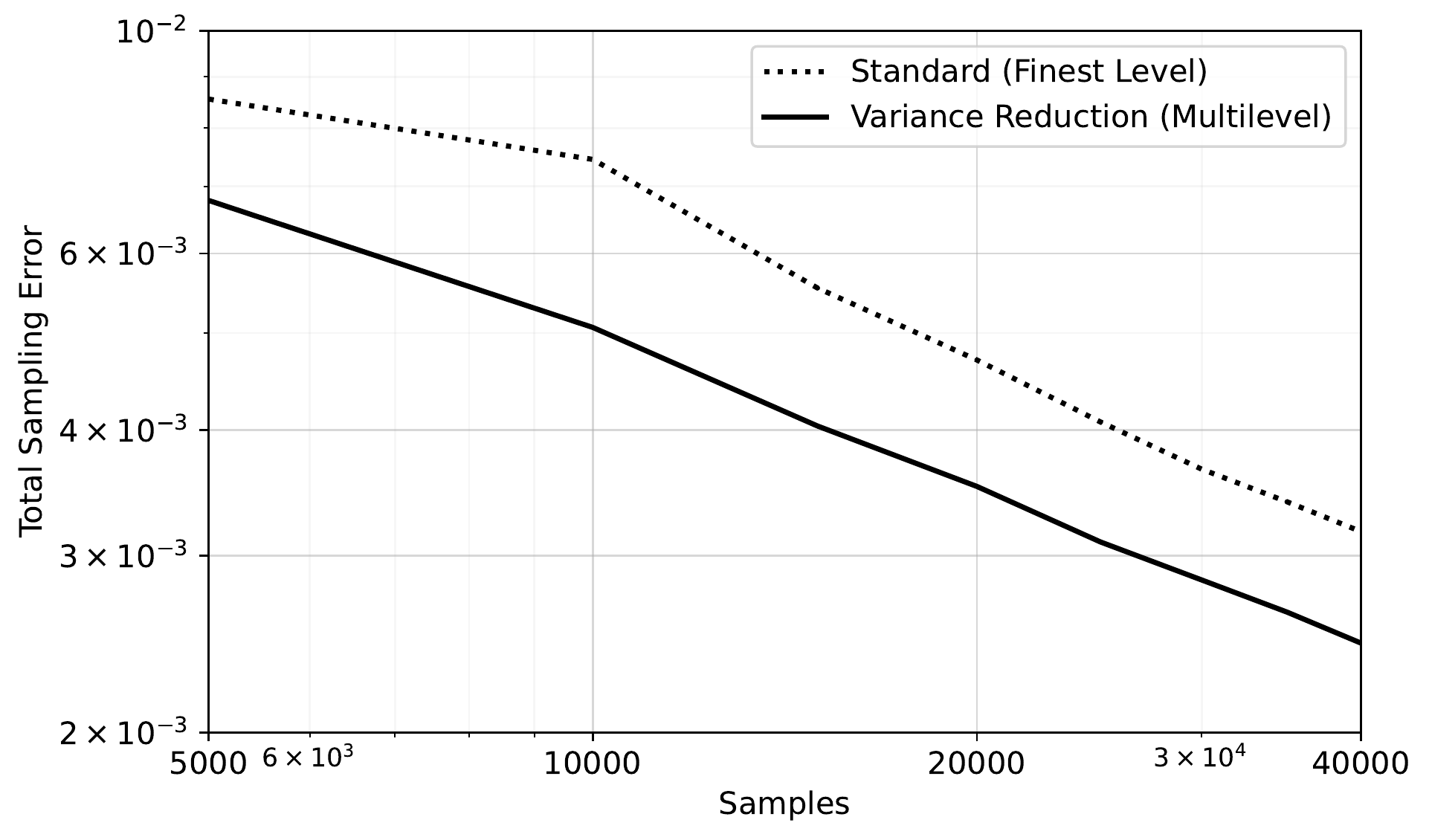}
    \caption{\label{fig:pp_se}Development of the total sampling error as sampling progresses for the sampler with (solid) and without (dashed) variance reduction.}
\end{figure}
\begin{figure}[t]
    \centering
    \includegraphics[width=1.0\linewidth]{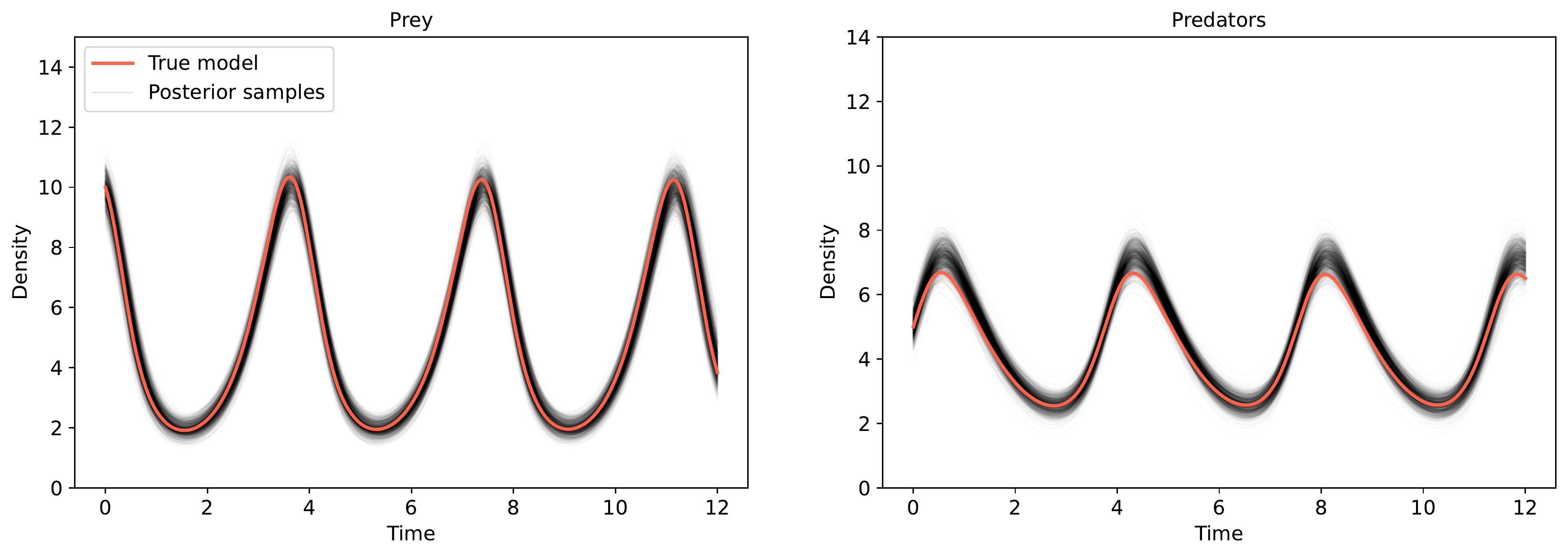}
    \caption{\label{fig:pp_samples}True model (red) and posterior samples (black).}
\end{figure}

\subsection{Subsurface Flow}\label{sec:rocks}

In this example, a simple model problem arising in subsurface flow modelling is considered. Probabilistic uncertainty quantification is of interest in various situations, for example in risk assessment of radioactive waste repositories. Moreover, this simple PDE model is often used as a benchmark for MCMC algorithms in the applied mathematics literature \cite{marzouk_stochastic_2007, marzouk_dimensionality_2009, dodwell_hierarchical_2015, conrad_accelerating_2016, conrad_parallel_2017, beskos_geometric_2017}. The classical equations which govern steady-state single-phase subsurface flow in a confined aquifer are Darcy's law coupled with an incompressibility constraint
\begin{equation}\label{eqn:fullDarcyEquations}
w + k\nabla p = g \quad \mbox{and} \quad \nabla \cdot w = 0, \quad \mbox{in} \quad D \subset \mathbb{R}^d
\end{equation}
for $d = 1,2$ or $3$, subject to suitable boundary conditions. Here $p$ denotes the hydraulic head of the fluid, $k$ the permeability tensor, $w$ the flux and $g$ is the source term.

A typical approach to treat the inherent uncertainty in this problem is to model the permeability as a random field $k = k(x,\omega)$ on $D \times \Omega$, for some probability space $(\Omega, \mathcal A, \mathbb P)$. Therefore, Eq.~\cref{eqn:fullDarcyEquations} can be written as the following PDE with random coefficients:
\begin{equation}\label{eq:spde}
-\nabla \cdot k(x,\omega)\nabla p(x,\omega) = f(x), \quad \mbox{for all} \quad x \in D,
\end{equation}
where $f:=-\nabla \cdot g$.
As a synthetic example, consider the domain $D := [0,1]^2$ with $f\equiv 0$ and deterministic boundary conditions
\begin{equation}
p|_{x_1=0} = 0, \quad p\vert_{x_1=1} = 1 \quad \mbox{and} \quad \partial_n p \vert_{x_2=0} = \partial_n p\vert_{x_2=1} = 0.
\end{equation}
A widely used model for the prior distribution of the permeability in hydrology is a log-Gaussian random field \cite{dodwell_hierarchical_2015, constantine_accelerating_2016, conrad_accelerating_2016, beskos_geometric_2017, lan_adaptive_2019}, characterised by the mean of $\log k$, here
chosen to be $0$, and by its covariance function, here chosen to be\vspace{-1ex}
\begin{equation}\label{eqn:covariance}
C({x},{y}) := \sigma^2 \exp\left(-\frac{\|{x} - {y}\|^2_2}{2\lambda^2}\right), \quad \mbox{for} \quad {x}, {y} \in D,
\end{equation}
with $\sigma = 2$ and $\lambda = 0.1$. Again, the log-Gaussian random 
field is parametrised using a truncated Karhunen-Lo\`eve (KL) expansion 
of $\log k$, i.e., an expansion in terms of a finite set of independent, standard Gaussian random variables $\theta_i \sim \mathcal{N}(0,1)$, $i=1,\ldots,R$, given by
\begin{equation}
\log k(x,\omega) = \sum_{i=1}^R \sqrt{\mu_i} \phi_i({x})\theta_i(\omega).
\end{equation}
Again, $\{\mu_i\}_{i \in \mathbb N}$ are the sequence of strictly decreasing real, positive eigenvalues, and $\{\phi_i\}_{i\in \mathbb N}$ the corresponding $L^2$-orthonormal eigenfunctions of the covariance operator with kernel $C(x,y)$.
Thus, the prior distribution on the parameter
$\theta = (\theta_i)_{i=1}^R$ in the stochastic PDE problem (Eq.~\cref{eq:spde}) is $\mathcal{N}(0,I_R)$. In this example we chose $R = 64$.

The aim is to infer the posterior distribution of $\theta$, conditioned on  measurements of $p$ at $M=25$ discrete locations $x^j \in D$, $j=1,\ldots,M$, stored in the vector ${d}_{obs} \in \mathbb R^{M}$. Thus, the forward operator is $\mathcal{F}:\mathbb{R}^R \to \mathbb{R}^M$ with $\mathcal{F}_j(\theta_\omega) = p(x^j,\omega)$.
\begin{figure}[t]
\centering
\includegraphics[width = 0.43\linewidth]{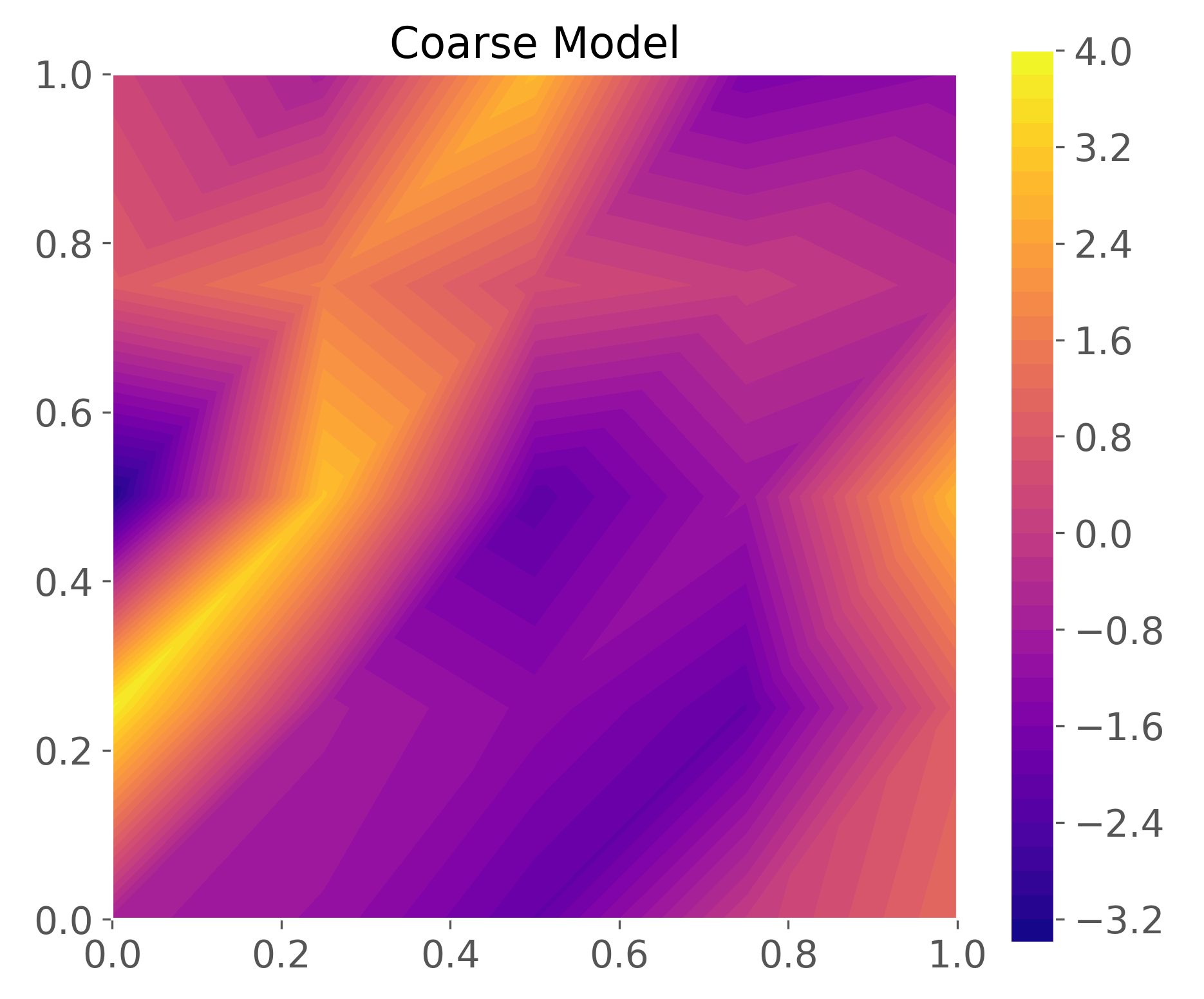}
\hspace{1cm}
\includegraphics[width = 0.43\linewidth]{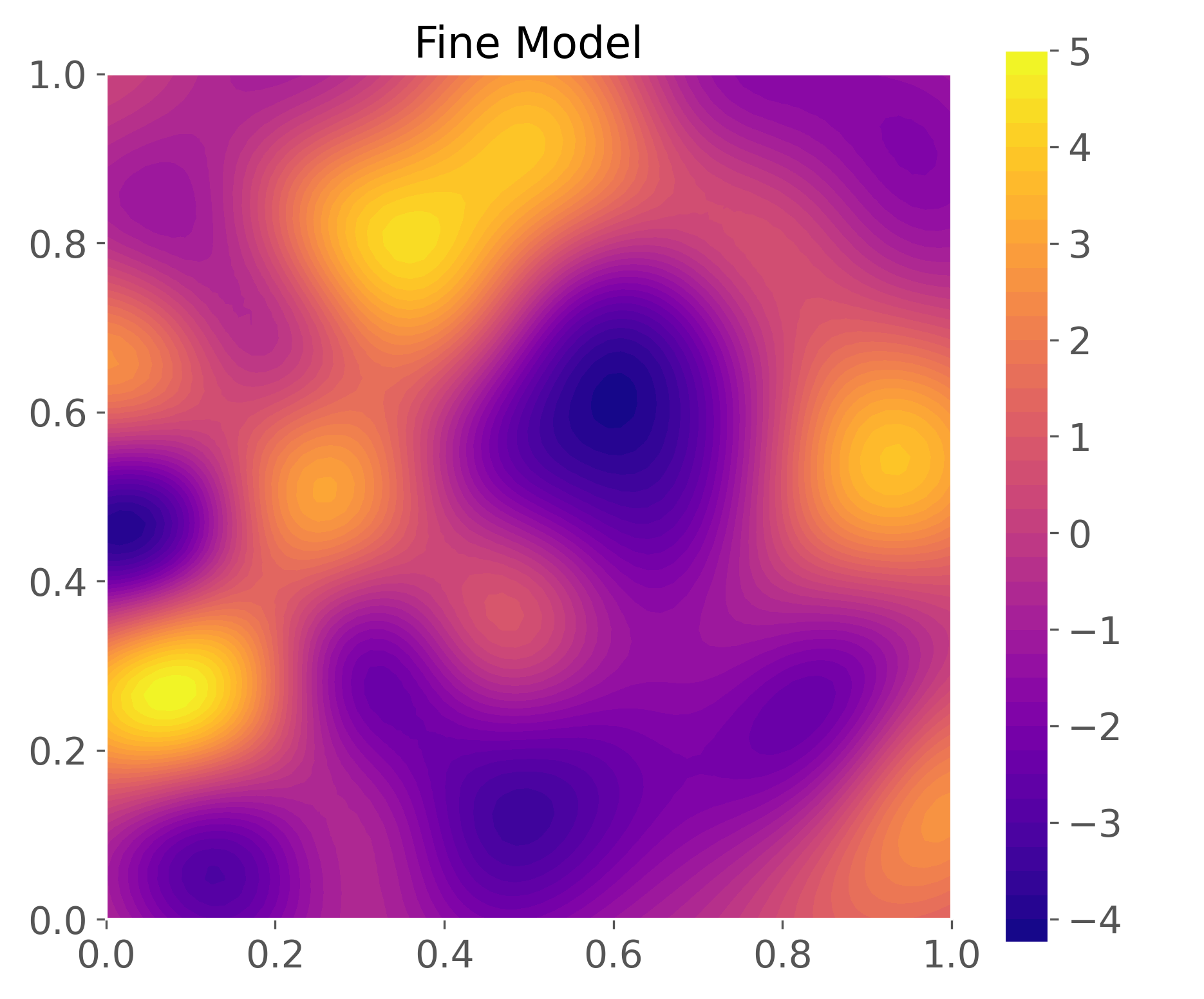}
\caption{True log-conductivity field of the coarsest model with $m_0$ grid points (left) and the finest model with $m_2$ grid points (right).}
\label{fig:gw_model}
\end{figure}

All finite element (FE) calculations were carried out with \texttt{FEniCS} \cite{langtangen_solving_2017}, using piecewise linear FEs on a uniform triangular mesh. The coarsest mesh $\mathcal T_0$ consisted of $m_0 = 5$ grid points in each direction, while subsequent levels were constructed by two steps of uniform refinement of $\mathcal T_0$, leading to $m_\ell = 4^\ell(m_0 -1) + 1$ grid points in each direction on the three grids $\mathcal{T}_\ell$, $\ell = 0, 1, 2$ (\cref{fig:gw_model}). 

To demonstrate the excellent performance of MLDA with the AEM, synthetic data was generated by drawing a sample from the prior distribution and solving (Eq.~\cref{eq:spde}) with the resulting realisation of $k$ on $\mathcal T_2$. To construct $d_{obs}$, the computed discrete hydraulic head values at $(x^j)_{j=1}^M$ were then perturbed by independent Gaussian noise, i.e. by a sample $\epsilon^* \sim \mathcal N(0, \Sigma_\epsilon)$ with $\Sigma_\epsilon = 0.01^2 I_M$.

To compare the ``vanilla'' MLDA approach to the AEM-enhanced version, we sampled the same model using identical sampling parameters, with and without AEM activated. For each approach, we sampled two independent chains, each initialised at a random point from the prior. For each chain, we drew 20000 samples plus a burn-in of 5000. We used subchain lengths $J_0 = J_1 =5$, since that produced the best trade-off between computation time and effective sample size for MLDA with the AEM. Note that the cost of computing the subchains on the coarser levels only leads to about a 50\% increase in the total cost for drawing a sample on level $L$. The DE-MC\textsubscript{Z} proposal \cite{ter_braak_differential_2008} was employed on the coarsest level with automatic step-size tuning during burnin to achieve an acceptance rate between 0.2 and 0.5.
\begin{figure}[t]
\centering
\includegraphics[width = 0.45\linewidth]{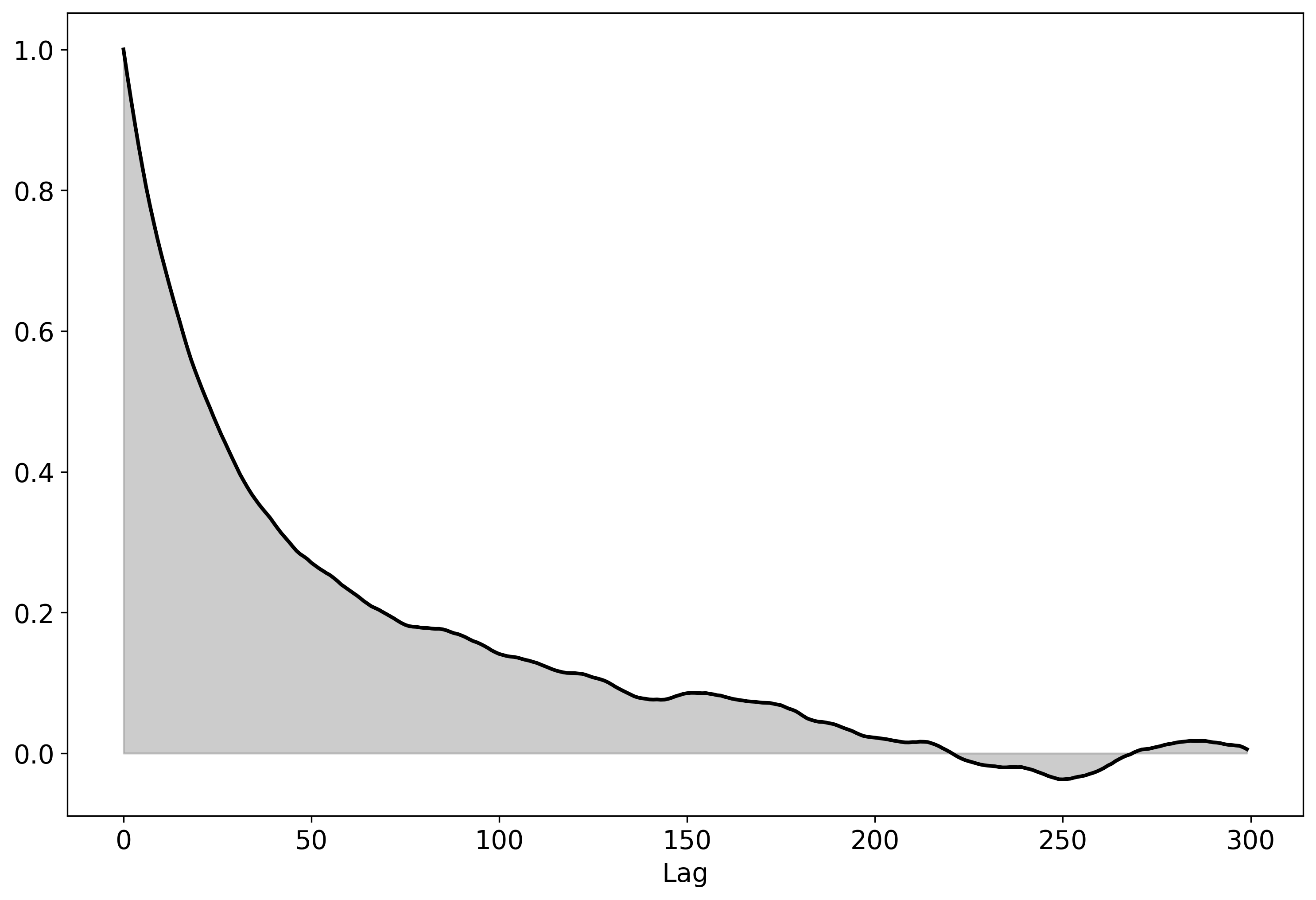}
\hspace{1cm}
\includegraphics[width = 0.45\linewidth]{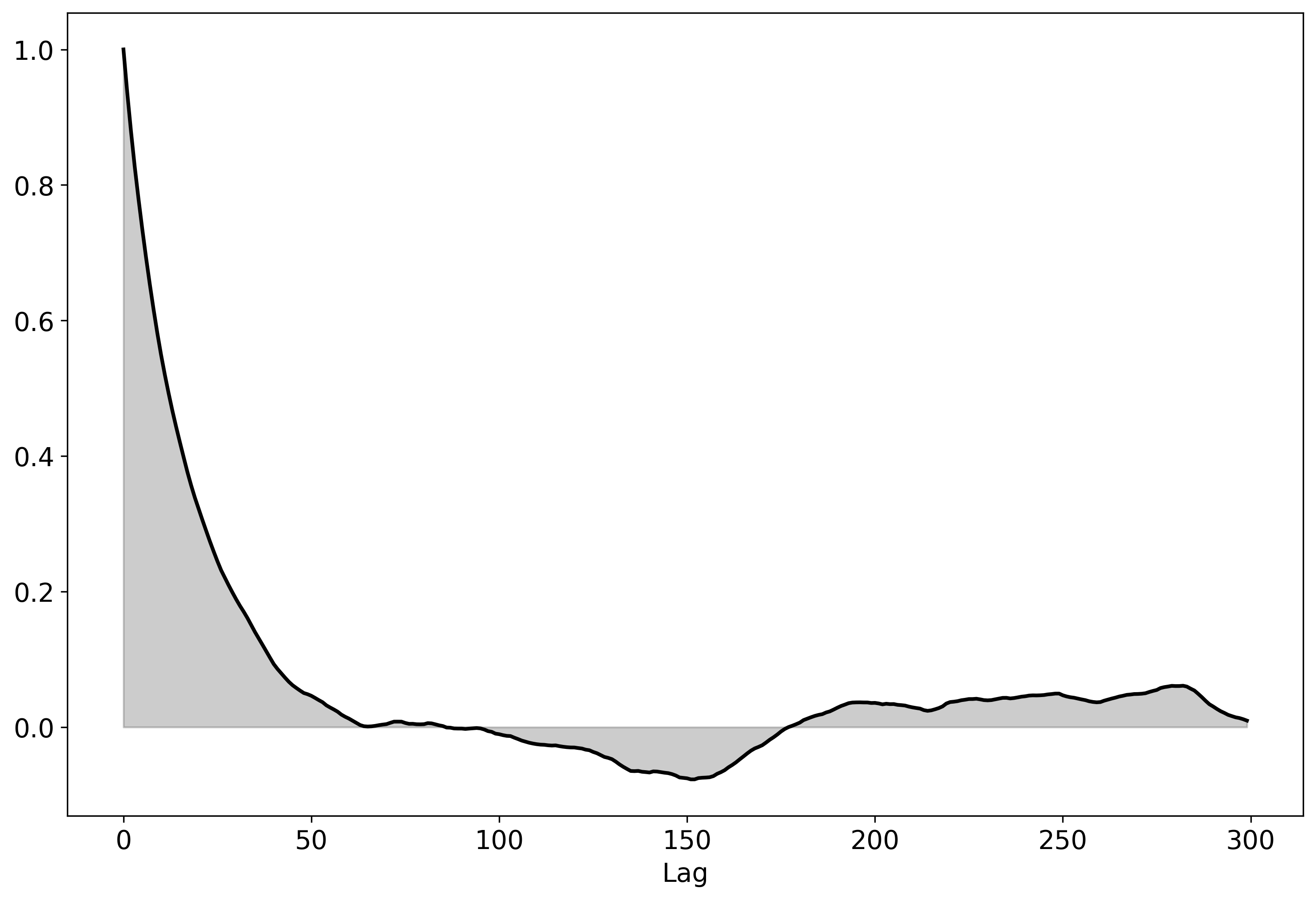}
\caption{Autocorrelation function for $\theta_1$ for samples without AEM (left) and with AEM (right).}
\label{fig:gw_acf}
\end{figure}

To assess the performance of the two approaches, the autocorrelation function (\cref{fig:gw_acf}) and the Effective Sample Size (ESS) for each parameter were computed \cite{vehtari_rank-normalization_2020}. Since the coarsest model was quite a poor approximation of the finest, running MLDA without the Adaptive Error Model (AEM) yielded relatively poor results, with an average ESS of 326 out of 40000 samples, and strong autocorrelation. However, when the AEM was employed and otherwise using the exact same sampling parameters, we obtained an average ESS of 1012 out of 40000 samples, with correspondingly weaker autocorrelation.

Note that this particular numerical experiment was chosen to demonstrate the dramatic effect that employing the AEM can have in MLDA, thus making it possible to use multilevel sampling strategies with very crude approximate models. A FE mesh with 25 degrees of freedom is extremely coarse for a Gaussian random field with correlation length $\lambda=0.1$, yet using the AEM it still provides an excellent surrogate for delayed acceptance. Typically much finer models are used in real applications with longer subchains on the coarser levels (cf.~\cite{dodwell_hierarchical_2015}). The AEM will be less critical in that case and MLDA will also produce good ESS without the AEM.

\section{Conclusions and Future Work}\label{sec:conclusions}

In this paper, we have presented an extension of state-independent Delayed Acceptance MCMC \cite{Chr05}, where a hierarchy of coarse MCMC samplers inform the finest sampler in a cascading fashion. If the models on the coarse levels are carefully designed, the approach can lead to significant computational savings, compared to standard single-level MCMC. A possible direction for future research would be to extend this approach further to the general Delayed Acceptance context, where also state-dependent approximations are supported. We would like to highlight that the choice of proposal on the coarsest level is free, as long as it 
achieves irreducibility for the coarsest distribution. We have chosen relatively simple proposals for the coarsest level, but if e.g. the gradient of the likelihood function is available, one can also employ more advanced gradient-informed proposals, such as MALA, HMC or NUTS.

The presented MLDA algorithm has clear similarities with Multilevel MCMC \cite{dodwell_hierarchical_2015}, in that it allows for any number of coarse levels and extended subchains on the coarse levels, but unlike MLMCMC, it is Markov and asymptotically unbiased, also for finite-length subchains. To achieve this quality, the algorithm must be sequential, which complicates parallelisation considerably. One remedy for this challenge, and a possible direction for future research, would be to employ pre-fetching of proposals \cite{brockwell_parallel_2006}. The central idea of pre-fetching is to precompute proposal ``branches'' and evaluate those in parallel, since for each proposal there are only two options, namely \textit{accept} or \textit{reject}. Pre-fetching and evaluating entire proposal branches is significantly more computationally demanding than the strictly sequential approach and generates more waste, similar to Multiple-Try Metropolis \cite{liu_multiple-try_2000}, since entire branches will effectively be rejected at each step. Minimising the waste of pre-fetching while maintaining the computational gains of parallelisation constitutes a complex, probabilistic optimisation problem. This could be addressed by controlling the pre-fetching length, e.g., using a reinforcement learning agent to learn an optimal policy, and to then hedge bets on valuable pre-fetching lengths, based on the latest sampling history.

A question that remains is the optimal choice of the subchain lengths $\{J_\ell\}_{\ell=1}^L$ for the coarse levels, which is essentially the only tuning parameter in the MLDA algorithm. A good rule of thumb may be to choose the length for any level such that the cost of creating the subchain corresponds to the cost of evaluating a single proposal on the next finer level, but this is not the most rigorous approach. The question has previously been studied in the context of Multilevel Monte Carlo \cite{cliffe_multilevel_2011} and MLMCMC \cite{dodwell_hierarchical_2015}, and involves either computing the optimal (effective) sample size for each level for a fixed acceptable sampling error, or computing the sampling error corresponding to a fixed computational budget. A similar approach can be taken for MLDA, but with some caveats. First, the number of samples on each level is determined, not only by the subchain length on that level, but by the number of samples on the next finer level. Hence, care must be taken when choosing the subchain lengths. Second, it is non-trivial to determine the effective sample size of a level \textit{a priori}, because of the direct correspondence with the distribution on the next finer level by way of the MLDA acceptance criterion. One possible workaround would be to determine the optimal subchain lengths adaptively by empirically determining the effective sample sizes and variances on each level during burn-in. Similarly to the pre-fetching approach outlined above, these decisions could also be outsourced to a reinforcement learning agent that would adaptively learn the optimal policy for minimising either cost or sampling error. We emphasize this question as a potential direction for future research.

\section*{Acknowledgements}
MCMC sampling was completed using the \texttt{MLDA} sampler of the free and open source probabilistic programming library \texttt{PyMC3}. The \texttt{PyMC3} code is available at GitHub: \href{https://github.com/pymc-devs/pymc}{https://github.com/pymc-devs/pymc}. The examples shown in this paper are available at \href{https://github.com/mikkelbue/MLDA\_examples}{https://github.com/mikkelbue/MLDA\_examples}. ML was funded as part of the Water Informatics Science and Engineering Centre for Doctoral Training (WISE CDT) under a grant from the Engineering and Physical Sciences Research Council (EPSRC), grant number EP/L016214/1. TD and GM were funded by a Turing AI Fellowship (2TAFFP\textbackslash100007). CF was partially funded by MBIE contract UOOX2106. The work of RS is supported by the Deutsche Forschungsgemeinschaft (DFG, German Research Foundation) under Germany’s Excellence Strategy EXC 2181/1 - 390900948 (the Heidelberg STRUCTURES Excellence Cluster). The authors have no conflicts of interest to declare.


\bibliographystyle{siamplain}
\bibliography{main} 

\end{document}